\title{Constructing Approximately Diagonal Quantum Gates}
\date{June 2021}
\author[Griffin]{Colton Griffin}
\author[Cui]{Shawn X. Cui}
\address{Department of Mathematics and Department of Physics and Astronomy\\Purdue University\\
West Lafayette\\
IN 47907\\U.S.A.}
\email{griff254@purdue.edu}
\address{Department of Mathematics and Department of Physics and Astronomy\\Purdue University\\
West Lafayette\\
IN 47907\\U.S.A.}
\email{cui177@purdue.edu}
\newtheorem{thm}{Theorem}[section]
\newtheorem*{thm*}{Theorem}
\newtheorem{lemma}[thm]{Lemma}
\newtheorem{prop}[thm]{Proposition}
\newtheorem*{prop*}{Proposition}
\newtheorem{conjecture}[thm]{Conjecture}
\newtheorem{cor}[thm]{Corollary}
\theoremstyle{definition}
\newtheorem{defn}[thm]{Definition}
\newtheorem{remark}[thm]{Remark}
\newtheorem{example}[thm]{Example}
\newcommand{\SU}{\text{SU}}
\newcommand{\diag}{\text{diag}}
\newcommand{\U}{\text{U}}
\begin{document}

\maketitle

\begin{abstract}
We study a method of producing approximately diagonal 1-qubit gates. For each positive integer, the method provides a sequence of gates that are defined iteratively from a fixed diagonal gate and an arbitrary gate. These sequences are conjectured to converge to diagonal gates doubly exponentially fast and are verified for small integers. We systemically study this conjecture and prove several important partial results. Some techniques are developed to pave the way for a final resolution of the conjecture. The sequences provided here have applications in quantum search algorithms, quantum circuit compilation, generation of leakage-free entangled gates in topological quantum computing, etc.
\end{abstract}

\section{Introduction}\label{section1}
A basic problem in quantum computing is to approximate an arbitrary quantum gate efficiently with a universal gate set. The Solovay-Kitaev theorem provides a general solution to this question. Given a universal gate set in $\SU(d)$, the theorem provides an algorithm to approximate an arbitrary gate of $\SU(d)$  with running time and space complexity both $O(\log^c (1/\epsilon))$ to an accuracy $\epsilon > 0$ \cite{dawson2006solovay}. Here $c \approx 3$ with its explicit value varying depending on the realizations of the theorem. However, this algorithm is usually not optimal and more efficient approximation protocols exist on certain gate sets. Developing optimal approximation protocols is especially critical for systems that have potential experimental implementations. One such example is the Fibonacci anyon circuit, one of the most prominent models for topological quantum computing \cite{freedman2002modular}. In this model, there are algorithms for approximation where the exponent $c$ can be improved to the asymptotically optimal value $c = 1$ \cite{kliuchnikov2014asymptotically, Reichardt2012}.

We focus on the methods used in \cite{Reichardt2012} where a key tool to obtain the optimal $c=1$ is the following proposition. Let $\theta=\frac{\pi}{5}$ and $D(\theta)=\diag(1,e^{i\theta})$. Consider the recursive sequence
\begin{equation}\label{equ:Uk_seq_for_5}
    U_{k+1} = U_{k} D(\theta)U_{k}^{-1} D(\theta)^3 U_{k} D(\theta)^3 U_{k}^{-1}D(\theta)U_{k}.
\end{equation}
It was shown that for any $U_0\in\U(2)$, $|(U_{k+1})_{21}| = |(U_{k})_{21}|^5$ where $(U_k)_{ij}$ denotes the $(i,j)$-entry of $U_k$ \cite{reichardt2005quantum, Reichardt2012}. Hence, if $U_0$ is not diagonal, the sequence in  Equation \ref{equ:Uk_seq_for_5} converges to a diagonal gate\footnote{In fact, the term $D(\theta)^{-7}$ has to be appended to the RHS of Equation \ref{equ:Uk_seq_for_5} in order for the sequence to converge. Otherwise, it would have several convergent subsequences. This will not affect our discussions below though.}. The convergence is double exponentially fast and the space complexity is $O(\log(1/\epsilon))$ to reach the limit within precision $\epsilon > 0$. The above technique is also heavily utilized to design composite pulse sequence for quantum error correction \cite{reichardt2005quantum} and to generate leakage-free entangling 2-qubit gates in the Fibonacci model \cite{carnahan2016systematically, cui2019search}. 

Equation \ref{equ:Uk_seq_for_5} in turn is inspired by a simpler sequence from Grover's quantum search algorithm \cite{Grover2005}. That is, for $\theta = \frac{\pi}{3}$, consider instead the sequence,
\begin{equation}\label{equ:Uk_seq_for_3}
    U_{k+1} = U_{k} D(\theta)U_{k}^{-1} D(\theta) U_{k}.
\end{equation}
Then it is straightforward to check that $|(U_{k+1})_{21}| = |(U_{k})_{21}|^3$. Besides Grover's search algorithm, this sequence is also used in some other quantum algorithms \cite{wocjan2008speedup, wocjan2009quantum}.

The relation between the $(2,1)$-entry (and also the $(1,2)$-entry) of adjacent terms in the above two sequences is intriguing as it is an exact equality. This motivates the question of whether they are special cases of a more general pattern. That is, fix $p=2n+1$ for $n\in\mathbb{N}$, and define $\theta_p\equiv \frac{\pi}{p}=\frac{\pi}{2n+1}$. Is there a recursive sequence $\{U_k^{(n)}\}_{k=0}^{\infty}$, defined similar in form to those in Equations \ref{equ:Uk_seq_for_5} and \ref{equ:Uk_seq_for_3} with $D(\theta)$ replaced by $D(\theta_p)$ such that $|(U_{k+1}^{(n)})_{21}| = |(U_{k}^{(n)})_{21}|^{2n+1}=|(U_{k}^{(n)})_{21}|^{p}$? Such a generalization not only is interesting on its own as a mathematical proposition, but also has applications in topological quantum computing. Recall that the Fibonacci model is described by the Witten-Chern-Simons theory $\SU(2)_3$, in which braiding of anyons naturally gives the diagonal gate $D(\frac{\pi}{5})$\cite{cui2019search}. Hence Equation \ref{equ:Uk_seq_for_5} can be used in this model. The theory $\SU(2)_p$ is also defined for any $p \geq 1$, and for odd $p=2n+1$, braiding of anyons gives the diagonal gate $D(\frac{\pi}{p+2})$. This can be obtained from the $R$-symbols of the theory (Ref. \cite{bonderson2007non}, Sec. 5.4). Therefore, the generalized sequence $\{U_k^{(n+1)}\}_{k=0}^{\infty}$ will be useful in the $\SU(2)_{p}$ anyon model for both topological compilation and generation of entangled gates.

A conjectured formula for the generalized sequence was given for each odd $p$\cite{Reichardt2012} (see also Section \ref{subsec:diagonalization} for an explicit form). For each $p = 2n+1$, the sequence $\{U_k^{(n)}\}$ is defined in a recursive formula similar to those in Equations \ref{equ:Uk_seq_for_5} and \ref{equ:Uk_seq_for_3}. The length of the words in the recursion is $O(n)$. Conjecture \ref{conj:convergence} states that $|(U_{k+1}^{(n)})_{21}| = |(U_k^{(n)})_{21}|^{2n+1}$.

In this paper, we systemically study this conjecture. For each odd $p = 2n+1 > 1$, we analyze the entries of $U_{k+1}^{(n)}$ in terms of those of $U_{k+1}^{(n)}$ and present them in a specially designed form. Explicitly, let $a_k = (U_k^{(n)})_{11}$ and $b_k = (U_k^{(n)})_{21}$ (ignoring the dependence of $a_k$ and $b_k$ on $n$ for now). Then 
\begin{align}
    \frac{b_{k+1}}{b_k} &= \beta_0+|a_k|^2(\beta_1-|b_k|^2(\beta_2+|a_k|^2(\beta_3-|b_k|^2(\ldots)))),
\end{align}
where the coefficients $\beta_0, \cdots, \beta_n$ are expressions involving entries of $D(\theta_{p})$. By using induction, we provide an explicit formula for the $\beta_j\,'$s in terms of roots of unity $\lambda_{j0} = \omega^{(-1)^jj}$ and $\lambda_{j1} = (-1)^{j+1}\omega^{(-1)^{j+1}j}$ for $\omega=e^{i\theta_p/2}$ (Theorem \ref{thm:general_alpha_beta}). In general, these $\beta_j\,'$s are very complicated and hence difficult to evaluate further. However, we are able to compute the values for $\beta_0, \ \beta_1$, and $\beta_n$ for any fixed $n$. Furthermore, we conjecture that the values of all the $\beta_j\,'$s can be expressed as binomial coefficients (Conjecture \ref{conj:values}). The derivation of such values itself is quite non-trivial and involves several technical identities about these binomial coefficients. However, we prove that this secondary conjecture is equivalent to Conjecture \ref{conj:convergence} using Theorem \ref{cor:values_vi}, providing a simpler method of proving the conjecture for a given $n$ by means of showing a set of identities on the coefficients $\beta_j$. A complete verification for the conjectured values would lead to a proof of Conjecture \ref{conj:convergence}. This is left for a future direction. As a concrete application, we prove Conjecture \ref{conj:convergence} for $p = 7$ ($n=3$), with the corresponding sequence given by,
\begin{equation}
    U_{k+1}=U_k D(\theta) U_k^{-1}D(\theta)^{5}U_k D(\theta)^{3}U_k^{-1} D(\theta)^{3}U_k D(\theta)^{5}U_k^{-1} D(\theta)U_k.
\end{equation}

In addition to the results above, we show that two sequences for $p_1$ and $p_2$ respectively can be combined to obtain a sequence for $p_1p_2$ which is different from the one constructed from the conjecture. As a consequence, there exists a sequence for $p= 15 = 3 \cdot 5$ with the desired property but different from the conjectured sequence for $p=15$.

The rest of the paper is organized as follows. In Section \ref{section2}, we provide some backgrounds and define the sequence for each odd $p$ in both the notation of \cite{Reichardt2012} as well as in an alternative form. In Section \ref{section3}, we present the matrix entries in each sequence as a set of solutions to special recursive equations and show how adjacent sequences are related with each other. Sections  \ref{section4} and \ref{section5} are devoted to studying these equations to greater details, including deriving explicit formulas for each $\beta_j$ and evaluating them for $j=0,n$, and 1. Section \ref{section6} provides an alternative approach to obtaining the sequence for a composite integer from those of its prime factors. 

\section{Preliminaries}\label{section2}
We begin with some notation. We use the following construction of a unitary matrix $U_k\in \U(2)$:
\begin{equation}\label{eq:unitary}
U_k=e^{i\varphi_k/2}\begin{pmatrix}
a_k&-\overline{b_k}\\
b_k&\overline{a_k}
\end{pmatrix}
\end{equation}
where $|a_k|^2+|b_k|^2=1$ and $\varphi_k\in [0,2\pi)$, hence $\det U_k=e^{i\varphi_k}$. If $\varphi_k=0$, then $U_k\in\SU(2)$. 
In this notation, the upper left element of $U_k$ is $(U_k)_{11}\equiv e^{i\varphi_k/2}a_k$ and the lower left element of $U_k$ is $(U_k)_{21}\equiv e^{i\varphi_k/2}b_k$. This extra phase will not be important, since throughout most of the paper we will let $\varphi_k=0$ without loss of generality. Hence the upper and lower left elements of $U_k$ will be referred to as $a_k$ and $b_k$ respectively.
We also denote $\chi_j\equiv j\bmod 2$ for brevity since this will be used often throughout the paper. 
\subsection{Diagonalizing Sequences}\label{subsec:diagonalization}
As before, fix $n\in\mathbb{N}$, defining $p\equiv2n+1$. Additionally, fix an input unitary $U_0\in\U(2)$. We wish to construct a sequence $\{U_k^{(n)}\}_{k=0}^{\infty}$ defined recursively from any $U_0$, such that $U_{k+1}$ is expressed as a product of $U_k$, $U_k^{-1}$, and a set of diagonal gates $D_j(\theta)$ to be chosen. The $(n)$ notation here on the sequences denotes which integer $n$ the sequence is defined with respect to. We will also refer to $n$ as the \textbf{order} of the sequence $\{U_{k}^{(n)}\}_{k=0}^{\infty}$. If the order $n$ referred to is clear in context it will be dropped. The objects that will have this notation applied are the elements of our sequences $U_k^{(n)}$ and their sub-elements, such as $(U_k^{(n)})_{11}\equiv e^{i\varphi_k/2}a_k^{(n)}$. Sometimes powers or inverses will be applied, but these will be written without a parenthesis around the superscript.

Let $D_j(\theta_p)$ be diagonal matrices indexed by $j$, with $\theta_p\equiv\frac{\pi}{p}\equiv\frac{\pi}{2n+1}$. These matrices are defined as $D_j(\theta_p)\equiv\diag(\lambda_{j0},\lambda_{j1})$, where
$\lambda_{j0}$ and $\lambda_{j1}$ (we will also refer to these pairs of elements as $\lambda_{jl}$, where $l=0,1$) denote the roots of unity
\begin{equation}\label{eq:lambda}
    \lambda_{j0}=\omega^{(-1)^{j}j},\quad \lambda_{j1}=(-1)^{j+1}\omega^{(-1)^{j+1}j}
\end{equation}
for $\omega=e^{i\theta_p/2}$, and $j=1,\ldots,n$. 

Here we will define the sequences in question.
\begin{defn}
Given some order $n$ and an input matrix $U_0\in\U(2)$, we define the corresponding \textbf{diagonalizing sequence} $\{U_k\}_{k=0}^{\infty} \equiv \{U_k^{(n)}\}_{k=0}^{\infty}$ to be given by the recursive equation $U_{k+1}\equiv Q_nU_k^{(-1)^n}P_n$, where $P_n$ and $Q_n$ are defined recursively as
\begin{equation}
\begin{aligned}
    &P_{j+1}=D_{j+1}(\theta_p)U_k^{(-1)^j}P_j,\\
    &Q_{j+1}=Q_jU_k^{(-1)^j}D_{j+1}(\theta_p),
\end{aligned}
\end{equation}
where $P_0=Q_0=I$.
\end{defn}
Here we are using Reichardt's notation for generating these sequences\cite{Reichardt2012}. For a given $n$, we have that $U_{k+1}$ is expressed as a product of $U_k,U_k^{-1}$, and $D_j(\theta_p)$ for $j=1,\ldots,n$. For example, for order 1 the recursive equation is given by expression
\begin{equation*}
  U_{k+1}=U_k D_1 U_k^{-1}D_1 U_k.
\end{equation*}
For order 2, the equation is given by the expression
\begin{equation*}
    U_{k+1}=U_k D_1 U_k^{-1}D_2 U_k D_2 U_k^{-1} D_1 U_k.
\end{equation*}
In Reichardt's definition of the above sequences he used an alternative set of diagonal matrices, given as $D_j'(\theta_p)\equiv\diag(1,(-1)^{j+1}e^{i\theta_p(-1)^{j+1}j})$, which correspond to $D_j(\theta_p)$ in our formulation.  However, there is no fundamental difference in convergence between these two sequences, as we shall show at the end of this section.

Below we provide an alternative description of the sequences.
\begin{defn}
Given some $U_k\in\U(2)$, let $m$ be any product of any number of matrices $U_{k}$, $U_{k}^{-1}$, and $D_j(\theta)=\diag(\lambda_{j0},\lambda_{j1})$ for any $j$. Define the \textbf{shifting operation} $T$ on $m$ to be a transformation denoted as $Tm$, such that in each entry of $m$ we replace $\theta_p\mapsto \theta_{p+2}$, $\lambda_{jl}\mapsto\lambda_{(j+1),l}$, $a_k\mapsto \overline{a_k}$, and $b_k\mapsto-b_k$.
\end{defn}
An immediate property that we can note is that since we map each instance of every element in the entries of $m$, we can simply distribute $T$ to each element of each matrix in the product $m$ from the definition. Therefore an equivalent mapping for $T$ is $U_k\mapsto U_k^{-1}$ and $D_j(\theta_p)\mapsto D_{j+1}(\theta_{p+2})$.
For a simple example of this shift, in order 1 the diagonalizing sequence is defined by $U_{k+1}=U_kD_1(\theta_p)U^{-1}_kD_1(\theta_p)U_k$. Therefore applying a shift gives ${T}U_{k+1}=U_k^{-1}D_2(\theta_{p+2})U_kD_2(\theta_{p+2})U_k^{-1}$.
This operation allows us to write a compact expression for the above definition of the diagonalizing sequences:
\begin{prop}\label{prop:shift_seq}
For a given order $n$, the diagonalizing sequence $\{U_k^{(n+1)}\}_{k=0}^\infty$ satisfies the recursive relation
\begin{equation}\label{eq:shift}
    U_{k+1}^{(n+1)}=U_kD_1(\theta_{p+2})\,{T}U_{k+1}^{(n)}\, D_1(\theta_{p+2}) U_k,
\end{equation}
where $U_{k+1}^{(n)}$ is treated as a function of $U_k$.
\end{prop}
\begin{proof}
Applying the shift to the definition for $U_{k+1}^{(n)}$, we get $TU_{k+1}^{(n)}=TQ_n\,U_{k}^{(-1)^{n+1}}\,TP_n$ by distribution. The factors $TP_n$ and $TQ_n$ are given by the relations
\[TP_{j+1} = D_{j+2}(\theta_{p+2})U_{k}^{(-1)^{j+1}}TP_{j},\quad TQ_{j+1}=TQ_jU_k^{(-1)^{j+1}}D_{j+2}(\theta_{p+2}).\]
This recursion stops at $j=0$, where $TP_0=TQ_0=I$. Now consider the products $\tilde P_{n+1}$ and $\tilde Q_{n+1}$ defined by $\tilde P_{j+1}=D_{j+1}(\theta_{p+2})U_k^{(-1)^{j}}\tilde P_j$ and $\tilde Q_{j+1}=\tilde Q_jU_k^{(-1)^{j}}D_{j+1}(\theta_{p+2})$ for $j=0,\ldots,n$. The tilde on the terms $\tilde P_j$ and $\tilde Q_j$ denotes using the angle $\theta_{p+2}$ instead of $\theta_{p}$, which was the angle $P_n$ originally used before shifting by $T$.
Note that by definition $U_{k+1}^{(n+1)}=\tilde Q_{n+1}U_k^{(-1)^{n+1}}\tilde P_{n+1}$. Left multiplication by $(TP_{j})^{-1}$ on $\tilde P_{j+1}$ gives us
\begin{align*}
    (TP_{j})^{-1}\tilde P_{j+1} &= \left((TP_{j-1})^{-1}U_k^{(-1)^{j+1}}D_{j+1}^{-1}(\theta_{p+2})\right)D_{j+1}(\theta_{p+2})U_k^{(-1)^j}\tilde P_j\\
    &=(TP_{j-1})^{-1}\tilde P_{j}.
\end{align*}
By induction, we get a chain of equalities $(TP_{n})^{-1}\tilde P_{n+1}=\ldots=(TP_{0})^{-1}\tilde P_{1}=D_1(\theta_{p+2})U_k$. Therefore $\tilde P_{n+1}=TP_{n}\,D_1(\theta_{p+2})U_k$. The same process applies for $\tilde Q_{n+1}$, and we conclude that $\tilde Q_{n+1}=U_kD_1(\theta_{p+2})\,TQ_{n}$. Therefore
\begin{align*}
    U_{k+1}^{(n+1)}&=\tilde Q_{n+1}U_k^{(-1)^{n+1}}\tilde P_{n+1}\\
    &=U_kD_1(\theta_{p+2})\,TQ_{n}U_k^{(-1)^{n+1}}TP_{n}\,D_1(\theta_{p+2})U_k\\
    &=U_kD_1(\theta_{p+2})TU_{k+1}^{(n)}D_1(\theta_{p+2})U_k.
\end{align*}
\end{proof}
For the rest of this paper, we will use this alternative form for the diagonalizing sequences instead of the original definition.
The conjectured property of these sequences is the following:
\begin{conjecture}\label{conj:convergence}
Given any $U_0\in\U(2)$ and a fixed order $n$, the diagonalizing sequence $\{U_{k}\}_{k=0}^{\infty}$ has the property that
\begin{equation}
    |(U_{k+1})_{21}| \equiv |b_{k+1}|= |b_{k}|^{2n+1}.
\end{equation}
\end{conjecture}
If $|b_0|=1$, then $|b_{k+1}|=|b_k|^{2n+1}=\ldots=|b_0|^{(2n+1)^{k+1}}=1$. Since $|b_0|\in[0,1]$, this conjecture implies that the sequence will converge to a diagonal gate unless the input gate $U_0$ is skew-diagonal.
Before we begin the analysis, we will prove two more properties of diagonalizing sequences.
\begin{prop}\label{prop:det}
For a given order $n$ and a diagonalizing sequence $\{U_k\}_{k=0}^\infty$, we have that $\det (U_k) = \det (U_0)$. Therefore $\det (U_k)=1$ if  $\det (U_0)=1$.
\end{prop}
\begin{proof}
Note that $\det(D_j(\theta_{p}))=\lambda_{j0}\lambda_{j1} = (-1)^{j+1}\omega^{(-1)^{j+1}j}\omega^{(-1)^jj}=(-1)^{j+1}$, so $\det(D_j(\theta_p))^2=1$. The determinants of $P_{j+1}$ and $Q_{j+1}$ are given by
\begin{equation*}
\begin{aligned}
    &\det (P_{j+1})=\det(D_{j+1}(\theta_p))\det(U_k)^{(-1)^j}\det(P_j),\\
    &\det (Q_{j+1})=\det(Q_j)\det(U_k)^{(-1)^j}\det(D_{j+1}(\theta_p)),
\end{aligned}
\end{equation*}
and so $\det(Q_{j+1})/\det(Q_j)=\det(P_{j+1})/\det(P_j)$. Since $\det(Q_1)=\det(P_1)=\det(U_k)\det(D_1(\theta_p))$, we have $\det(Q_2)=\det(P_2)$ and by induction $\det(Q_n)=\det(P_n)$. Therefore $\det(U_{k+1})=\det(Q_n)^2\det(U_k)^{(-1)^n}$. We claim that $\det(Q_j)^2=1$ if $j$ is even and $\det(Q_j)^2=\det(U_k)^2$ if $j$ is odd. We have that $\det(Q_1)^2=\det(U_k)^2$, and by induction we assume that the result holds for $j$. For odd $j$ we have 
$\det(Q_{j+1})^2 = \det(U_k)^{-2}\det(Q_j)^2=1$, and for even $j$ we have $\det(Q_{j+1})^2 = \det(U_k)^{2}\det(Q_j)^2=\det(U_k)^2$, which implies the statement for $j+1$. By induction we conclude that the result holds for $j=n$. For odd $n$, we get $\det(U_{k+1}) = \det(Q_n)^2\det(U_k)^{-1} = \det(U_k)^2\det(U_k)^{-1}=\det(U_k)$, and for even $n$ we get $\det(U_{k+1}) = \det(Q_n)^2\det(U_k) = 1\cdot\det(U_k)=\det(U_k)$. This shows the result, as $\det(U_{k+1})=\det(U_k)=\ldots=\det(U_0)$.
\end{proof}
\begin{prop}\label{prop:phase}
Given a diagonalizing sequence $\{U_k\}_{k=0}^\infty$ of order $n$ and input $U_0$, consider the sequence $\{\tilde U_k\}_{k=0}^\infty$ defined by the recursive relation $\tilde U_{k} = e^{i\gamma_k}U_{k}$ for $\gamma_k\in\mathbb{R}$. If $\{U_k\}$ has the property from Conjecture \ref{conj:convergence}, then so does $\{\tilde U_k\}$.
\end{prop}
\begin{proof}
For each $k$ in the sequence $\{\tilde U_k\}$, the lower left element $(\tilde U_k)_{21}$ will have the same norm as $(U_k)_{21}$,
and so $|(\tilde U_{k+1})_{21}|=|(U_{k+1})_{21}|=|(U_{k})_{21}|^{2n+1}=|(\tilde U_{k})_{21}|^{2n+1}$.
\end{proof}
Without loss of generality, we will set $\det U_0=1$ for convenience, as for any input matrix $U_0$ with $\det U_k=e^{i\varphi_0}$ we satisfy the property from Conjecture \ref{conj:convergence} if and only if we have the same property for the input matrix $\tilde U_0=e^{-i\varphi_0/2}U_0$ with determinant $\det \tilde U_0 = 1$.
\begin{remark}
We can pull the $\lambda_{j0}$ factor out for each matrix $D_j$ and obtain the diagonal matrices $D_j'=\diag(1,(-1)^{j+1}e^{i\theta_p(-1)^{j+1}j})$, the same definition as given by Reichardt in Equation 6\cite{Reichardt2012}. The sequence $\{U_k'\}$ defined in the same way as $\{U_k\}$ but replacing $D_j$ with $D_j'$ satisfies the property from Conjecture \ref{conj:convergence} if and only if $\{U_k\}$ has the property by Proposition \ref{prop:phase}. This is because $\tilde U_k$ and $U_k$ differ by a phase $U_k=(\prod_{j=1}^n\lambda_{j0}^2)\tilde U_k$.
That said, our choice for $D_j$ is due to convenient properties like Proposition \ref{prop:det}, which this sequence does not have; when we refer to $\lambda_{j0}$ and $\lambda_{j1}$, we mean the definitions in equation \ref{eq:lambda} unless stated otherwise.
\end{remark}

\section{Recursively Constructing Sequences of Arbitrary Order}\label{section3}
In this section we show our elements come in the form $(U_{k+1})_{11}=a_{k+1}=a_k\mathcal{A}_k$ and $(U_{k+1})_{21}=b_{k+1}=b_k\mathcal{B}_k$, where $\mathcal{A}_k$ and $\mathcal{B}_k$ are polynomial functions in terms of $\lambda_{jl}$ and $b_k,a_k$. This ansatz will allow us to examine the Conjecture \ref{conj:convergence} in a way that is independent of $U_0$.
\begin{prop}\label{prop:ansatz}
For any order $n$, the first column of $U_{k+1}$ comes in the form $(U_{k+1})_{11}=a_{k+1}=a_k\mathcal{A}_k$ and $(U_{k+1})_{21}=b_{k+1}=b_k\mathcal{B}_k$, where $\mathcal{A}_k$ and $\mathcal{B}_k$ are polynomials of the form
\begin{equation}\label{eq:ansatz}
\begin{aligned}
    \mathcal{B}_k&=\beta_0+|a_k|^2\left(\sum_{j=1}^{\lfloor{n/2}\rfloor}(-|b_k|^2|a_k|^2)^{j-1}(\beta_{2j-1}-|b_k|^2\beta_{2j})\right),\\
    \mathcal{A}_k&=\alpha_0-|b_k|^2\left(\sum_{j=1}^{\lfloor{n/2}\rfloor}(-|b_k|^2|a_k|^2)^{j-1}(\alpha_{2j-1}+|a_k|^2\alpha_{2j})\right),
\end{aligned}
\end{equation}
where $\beta_j$ and $\alpha_j$ are defined to be 0 if $j>n$. For each $j=1,\ldots,n$, the coefficients $\beta_j$ and $\alpha_j$ are polynomial expressions of $\lambda_{i0}$ and $\lambda_{i1}$ for $i=1,\ldots,n$.
\end{prop}
\begin{proof}
We are going to use induction on $n$ to prove this ansatz is correct, with the base case $n=1$. Using the above ansatz we can express $U_{k+1}$ and the shifted $TU_{k+1}$ to be of the form
\begin{equation}
    U_{k+1}=\begin{psmallmatrix}
        {a_{k+1}}&-\overline{b_{k+1}}\\
        {b_{k+1}}&\overline{a_{k+1}}
    \end{psmallmatrix}\equiv\begin{psmallmatrix}
        {a_k}{\mathcal{A}}_k&-\overline{b_k}\overline{{\mathcal{B}}_k}\\
        {b_k}{\mathcal{B}}_k&\overline{a_k}\overline{{\mathcal{A}}_k}
    \end{psmallmatrix}\implies {T}U_{k+1}=
    \begin{psmallmatrix}
        \overline{a_k}{T}{\mathcal{A}}_k&\overline{b_k}\overline{{T}{\mathcal{B}}_k}\\
        -{b_k}{T}{\mathcal{B}}_k&{a_k}\overline{{T}{\mathcal{A}}_k}
    \end{psmallmatrix}.
\end{equation}
Now we insert this into our sequence $U_{k+1}^{(n+1)}$. The calculation below will be useful for showing how case $n$ implies case $n+1$ as well as for examining the base case. Assume that $\mathcal{A}_k^{(n)}$ and $\mathcal{B}_k^{(n)}$ take the form of the ansatz. Then expanding out the definition of the diagonalizing sequence in Proposition \ref{prop:shift_seq} yields
\begin{gather*}
    U_{k+1}^{(n+1)}=U_kD_1\, {T}U_{k+1}^{(n)}\, D_1 U_k\\
    =
    \begin{psmallmatrix}
        a_k&-\overline{b_k}\\
        b_k&\overline{a_k}
    \end{psmallmatrix}
    \begin{psmallmatrix}
        \lambda_{10}&0\\
        0&\lambda_{11}
    \end{psmallmatrix}
    \begin{psmallmatrix}
        \overline{a_k}{T}{\mathcal{A}}_k&\overline{b_k}\overline{{T}{\mathcal{B}}_k}\\
        -{b_k}{T}{\mathcal{B}}_k&{a_k}\overline{{T}{\mathcal{A}}_k}
    \end{psmallmatrix}
    \begin{psmallmatrix}
        \lambda_{10}&0\\
        0&\lambda_{11}
    \end{psmallmatrix}
    \begin{psmallmatrix}
        a_k&-\overline{b_k}\\
        b_k&\overline{a_k}
    \end{psmallmatrix}\\
    =\begin{psmallmatrix}
        \lambda_{10}a_k&-\lambda_{11}\overline{b_k}\\
        \lambda_{10}b_k&\lambda_{11}\overline{a_k}
    \end{psmallmatrix}
    \begin{psmallmatrix}
        \overline{a_k}{T}{\mathcal{A}}_k&\overline{b_k}\overline{{T}{\mathcal{B}}_k}\\
        -{b_k}{T}{\mathcal{B}}_k&{a_k}\overline{{T}{\mathcal{A}}_k}
    \end{psmallmatrix}
    \begin{psmallmatrix}
        \lambda_{10}a_k&-\lambda_{10}\overline{b_k}\\
        \lambda_{11} b_k&\lambda_{11}\overline{a_k}
    \end{psmallmatrix}\\
    =
    \begin{psmallmatrix}
        \lambda_{10}a_k&-\lambda_{11}\overline{b_k}\\
        \lambda_{10}b_k&\lambda_{11}\overline{a_k}
    \end{psmallmatrix}
    \begin{psmallmatrix}
        \lambda_{10}|a_k|^2{T}{\mathcal{A}}_k+\lambda_{11}|b_k|^2\overline{{T}{\mathcal{B}}_k}&-\overline{a_kb_k}(\lambda_{10}{T}{\mathcal{A}}_k-\lambda_{11}\overline{{T}{\mathcal{B}}_k})\\
        a_kb_k(\lambda_{11}\overline{{T}{\mathcal{A}}_k}-\lambda_{10}{{T}{\mathcal{B}}_k})&\lambda_{11}|a_k|^2\overline{{T}{\mathcal{A}}_k}+\lambda_{10}|b_k|^2{{T}{\mathcal{B}}_k}
    \end{psmallmatrix}\\
    =\begin{psmallmatrix}
        (U_{k+1}^{(n+1)})_{11}&(U_{k+1}^{(n+1)})_{12}\\
        (U_{k+1}^{(n+1)})_{21}&(U_{k+1}^{(n+1)})_{22}
    \end{psmallmatrix},
\end{gather*}
where each element of the matrix is given by the forms
\begin{align*}
    (U_{k+1}^{(n+1)})_{11} &= a_k(\lambda_{10}^2|a_k|^2{T}{\mathcal{A}}_k+\lambda_{10}\lambda_{11}|b_k|^2\overline{{T}{\mathcal{B}}_k} - |b_k|^2(\lambda_{11}^2\overline{{T}{\mathcal{A}}_k}-\lambda_{10}\lambda_{11}{{T}{\mathcal{B}}_k}))\\
    &=a_k(\lambda_{10}^2{T}{\mathcal{A}}_k - |b_k|^2(\lambda_{10}^2{T}{\mathcal{A}}_k + \lambda_{11}^2\overline{{T}{\mathcal{A}}_k}-\lambda_{10}\lambda_{11}{{T}{\mathcal{B}}_k}-\lambda_{10}\lambda_{11}\overline{{T}{\mathcal{B}}_k})),
\end{align*}
\begin{align*}
    (U_{k+1}^{(n+1)})_{21}&=b_k(\lambda_{10}^2|a_k|^2{T}{\mathcal{A}}_k+\lambda_{10}\lambda_{11}|b_k|^2\overline{{T}{\mathcal{B}}_k} + |a_k|^2(\lambda_{11}^2\overline{{T}{\mathcal{A}}_k}-\lambda_{10}\lambda_{11}{{T}{\mathcal{B}}_k}))\\
    &=b_k(\lambda_{10}\lambda_{11}\overline{{T}{\mathcal{B}}_k} + |a_k|^2(\lambda_{10}^2{T}{\mathcal{A}}_k + \lambda_{11}^2\overline{{T}{\mathcal{A}}_k}-\lambda_{10}\lambda_{11}{{T}{\mathcal{B}}_k}-\lambda_{10}\lambda_{11}\overline{{T}{\mathcal{B}}_k})),
\end{align*}
\begin{align*}
    (U_{k+1}^{(n+1)})_{12}&=-\overline{b_k}(\lambda_{11}^2|a_k|^2\overline{{T}{\mathcal{A}}_k}+\lambda_{10}\lambda_{11}|b_k|^2{{T}{\mathcal{B}}_k} + |a_k|^2(\lambda_{10}^2{{T}{\mathcal{A}}_k}-\lambda_{10}\lambda_{11}\overline{{T}{\mathcal{B}}_k}))\\
    &=-\overline{b_k}({\lambda_{10}\lambda_{11}{T}{\mathcal{B}}_k} + |a_k|^2(\lambda_{10}^2{{T}{\mathcal{A}}_k}+\lambda_{11}^2\overline{{T}{\mathcal{A}}_k}-\lambda_{10}\lambda_{11}{{T}{\mathcal{B}}_k}-\lambda_{10}\lambda_{11}\overline{{T}{\mathcal{B}}_k}),
\end{align*}
\begin{align*}
    (U_{k+1}^{(n+1)})_{22}&=\overline{a_k}(\lambda_{11}^2|a_k|^2\overline{{T}{\mathcal{A}}_k}+\lambda_{10}\lambda_{11}|b_k|^2{{T}{\mathcal{B}}_k} - |b_k|^2(\lambda_{10}^2{{T}{\mathcal{A}}_k}-\lambda_{10}\lambda_{11}\overline{{T}{\mathcal{B}}_k}))\\
    &=\overline{a_k}(\lambda_{11}^2\overline{{T}{\mathcal{A}}_k} - |b_k|^2(\lambda_{10}^2{{T}{\mathcal{A}}_k}+\lambda_{11}^2\overline{{T}{\mathcal{A}}_k}-\lambda_{10}\lambda_{11}{{T}{\mathcal{B}}_k}-\lambda_{10}\lambda_{11}\overline{{T}{\mathcal{B}}_k})).
\end{align*}
In each secondary line, we have reorganized the terms in $(U_{k+1}^{(n+1)})_{ij}$ by using the identity $|a_k|^2+|b_k|^2=1$. Note that the shifted $T\mathcal{A}_k$ and $T\mathcal{B}_k$ are still of the form of the ansatz since $T|a_k|^2 = T(a_k\overline{a_k})=T(\overline{a_k}{a_k})=|a_k|^2$ and $T|b_k|^2=T((-b_k)(-\overline{b_k}))=|b_k|^2$. In $(U_{k+1}^{(n+1)})_{11}$, the terms $T\mathcal{B}_k+\overline{T\mathcal{B}_k}$ preserve the form of $\mathcal{A}_k^{(n+1)}$ since they both have a $|a_k|^2$ factor in front. For $T\mathcal{A}_k$ and $\overline{T\mathcal{A}_k}$, we substitute $|a_k|^2=1-|b_k|^2$ and $|b_k|^2=1-|a_k|^2$ to obtain the following equivalent expressions for the ansatz:
\begin{equation*}
\begin{aligned}
    \mathcal{B}_k
    &=\beta_0+\sum_{j=1}^{\lfloor{n/2}\rfloor}(-|b_k|^2|a_k|^2)^{j-1}\beta_{2j-1}\\
    &-|b_k|^2\sum_{j=1}^{\lfloor{n/2}\rfloor}(-|b_k|^2|a_k|^2)^{j-1}(\beta_{2j-1}+|a_k|^2\beta_{2j}),\\
    \mathcal{A}_k
    &=\alpha_0-\sum_{j=1}^{\lfloor{n/2}\rfloor}(-|b_k|^2|a_k|^2)^{j-1}\alpha_{2j-1}\\
    &+|a_k|^2\sum_{j=1}^{\lfloor{n/2}\rfloor}(-|b_k|^2|a_k|^2)^{j-1}(\alpha_{2j-1}-|b_k|^2\alpha_{2j}).
\end{aligned}
\end{equation*}
Therefore we write $\mathcal{B}_k$ and $\mathcal{A}_k$ alternatively in the form
\begin{equation}\label{eq:beta_alt}
    \mathcal{B}_k=\sum_{j=0}^{\lfloor{n/2}\rfloor}(-|b_k|^2|a_k|^2)^{j}(\beta_{2j}+\beta_{2j+1}-|b_k|^2\beta_{2j+1}),
\end{equation}
\begin{equation}\label{eq:alpha_alt}
    \mathcal{A}_k=\sum_{j=0}^{\lfloor{n/2}\rfloor}(-|b_k|^2|a_k|^2)^{j}(\alpha_{2j}-\alpha_{2j+1}+|a_k|^2\alpha_{2j+1}).
\end{equation}
Plugging the alternative forms in, we can see that $T\mathcal{A}_k$ and $\overline{T\mathcal{A}_k}$ also preserves the form for $\mathcal{A}_k^{(n+1)}$. We can apply the same approach for the other entries in the matrix to see that $\mathcal{B}_k^{(n+1)}$ also takes the form of the ansatz.
    
Now we consider the base case $n=1$. Substituting $T\mathcal{A}_k\to 1$ and $T\mathcal{B}_k\to 1$, this corresponds to the original sequence examined by Grover, $U_{k+1}=U_kD_1U_k^{-1}D_1U_k$:
\begin{align*}
    (U_{k+1})_{11}&=a_k(\lambda_{10}^2 - |b_k|^2(\lambda_{10}^2 + \lambda_{11}^2-\lambda_{10}\lambda_{11}-\lambda_{10}\lambda_{11})),\\
    (U_{k+1})_{21}&=b_k(\lambda_{10}\lambda_{11} + |a_k|^2(\lambda_{10}^2 + \lambda_{11}^2-\lambda_{10}\lambda_{11}-\lambda_{10}\lambda_{11})).
\end{align*}
In the form of the ansatz above, we can see that $\alpha_0=\lambda_{10}^2$, $\beta_0=\lambda_{10}\lambda_{11}$, and $\alpha_1=\beta_1=(\lambda_{10}-\lambda_{11})^2$. Since this form is true for $n=1$, it must continue to hold for any $n$ by induction.
\end{proof}
In the statement of the above proposition, we claimed that $\alpha_j$ and $\beta_j$ are polynomials in the $\lambda_{jl}$'s. We derive their exact forms later in this paper, in Theorem \ref{thm:general_alpha_beta}. For reference, the reader may look at Appendix B for the explicit forms for $\alpha_j^{(n)}$ and $\beta_j^{(n)}$, for $n=1,2$, and 3. These forms were obtained by directly expanding the sequences and writing the sequences in the form of the ansatz.


\subsection{Constructing a Recursive System of Equations}
From Proposition \ref{prop:ansatz}, we can conclude that for order $n+1$, we have
\begin{equation}
\begin{aligned}
    \frac{a_{k+1}}{a_k}=\mathcal{A}_{k}^{(n+1)}=\lambda_{10}^2{T}{\mathcal{A}}_k^{(n)} &- |b_k|^2\Big(\lambda_{10}^2{T}{\mathcal{A}}_k^{(n)} + \lambda_{11}^2\overline{{T}{\mathcal{A}}_k^{(n)}}\\
    &-\lambda_{10}\lambda_{11}{{T}{\mathcal{B}}_k^{(n)}}-\lambda_{10}\lambda_{11}\overline{{T}{\mathcal{B}}_k^{(n)}}\Big),
\end{aligned}
\end{equation}
\begin{equation}
\begin{aligned}
    \frac{b_{k+1}}{b_k}=\mathcal{B}_{k}^{(n+1)}=\lambda_{10}\lambda_{11}\overline{{T}{\mathcal{B}}_k^{(n)}} &+ |a_k|^2\Big(\lambda_{10}^2{T}{\mathcal{A}}_k^{(n)} + \lambda_{11}^2\overline{{T}{\mathcal{A}}_k^{(n)}}\\
    &-\lambda_{10}\lambda_{11}{{T}{\mathcal{B}}_k^{(n)}}-\lambda_{10}\lambda_{11}\overline{{T}{\mathcal{B}}_k^{(n)}}\Big).
\end{aligned}
\end{equation}
We can assume $\mathcal{B}_k$ is real for any $k$ and $n$ by induction on $k$, since $\lambda_{10}\lambda_{11}=1$ and each complex term is summed with its conjugate, along with the fact that $\mathcal{B}_k$ is real for $n=1$. Including this assumption allows us to create the following system of equations by aligning terms together and using equations \ref{eq:beta_alt} and \ref{eq:alpha_alt} to preserve the ansatz form:
\begin{equation}\label{eq:alpha}
    \alpha_j^{(n+1)}=\lambda_{10}^2{T}{\alpha}_{j}^{(n)}+\lambda_{10}^2({T}{\alpha}_{j-1}^{(n)}-\chi_j{T}{\alpha}_{j}^{(n)})+\lambda_{11}^2\left(\overline{{T}{\alpha}_{j-1}^{(n)}}-\chi_j\overline{{T}{\alpha}_{j}^{(n)}}\right)-2\lambda_{10}\lambda_{11}{T}{\beta}_{j-1}^{(n)},
\end{equation}
\begin{equation}\label{eq:beta}
    \beta_j^{(n+1)}=\lambda_{10}\lambda_{11}{T}{\beta}_j^{(n)}+\lambda_{10}^2{T}{\alpha}_{j-1}^{(n)}+\lambda_{11}^2\overline{{T}{\alpha}_{j-1}^{(n)}}-2\lambda_{10}\lambda_{11}\left({T}{\beta}_{j-1}^{(n)}+\chi_j{T}{\beta}_{j}^{(n)}\right).
\end{equation}
We denote $\chi_j\equiv j\bmod 2$ as before, so when we align terms from $\mathcal{A}_j^{(n)}$, $\mathcal{A}_{j-1}^{(n)}$, $\mathcal{B}_j^{(n)}$, and $\mathcal{B}_{j-1}^{(n)}$, the $\chi_j$ represents the alternating additional term in equations \ref{eq:beta_alt} and \ref{eq:alpha_alt}.
This approach reduces the study of the sequence $\{U_{k}^{(n+1)}\}$ into solving the above two equations, and by solving them we find closed-form expressions for every $\alpha_j$ and $\beta_j$.

If $j$ is odd, then we get that this simplifies to
\begin{equation}
\begin{aligned}
    \alpha_j^{(n+1)}&=\lambda_{10}^2{T}{\alpha}_{j-1}^{(n)}+\lambda_{11}^2\overline{{T}{\alpha}_{j-1}^{(n)}}-2\lambda_{10}\lambda_{11}{T}{\beta}_{j-1}^{(n)}-\lambda_{11}^2\overline{{T}{\alpha}_{j}^{(n)}}\\
    \beta_j^{(n+1)}&=\lambda_{10}^2{T}{\alpha}_{j-1}^{(n)}+\lambda_{11}^2\overline{{T}{\alpha}_{j-1}^{(n)}}-2\lambda_{10}\lambda_{11}{T}{\beta}_{j-1}^{(n)}-\lambda_{10}\lambda_{11}{T}{\beta}_j^{(n)}.
\end{aligned}    
\end{equation}
For even $j$, it becomes
\begin{equation}
\begin{aligned}
    \alpha_j^{(n+1)}&=\lambda_{10}^2{T}{\alpha}_{j-1}^{(n)}+\lambda_{11}^2\overline{{T}{\alpha}_{j-1}^{(n)}}-2\lambda_{10}\lambda_{11}{T}{\beta}_{j-1}^{(n)}+\lambda_{10}^2{{T}{\alpha}_{j}^{(n)}}\\
    \beta_j^{(n+1)}&=\lambda_{10}^2{T}{\alpha}_{j-1}^{(n)}+\lambda_{11}^2\overline{{T}{\alpha}_{j-1}^{(n)}}-2\lambda_{10}\lambda_{11}{T}{\beta}_{j-1}^{(n)}+\lambda_{10}\lambda_{11}{T}{\beta}_j^{(n)}.
\end{aligned}    
\end{equation}
The goal of section 5 will be to examine these equations in detail, ultimately deriving the general solution to both in Theorem \ref{thm:general_alpha_beta}.

\begin{remark}
Although our assumptions allow us to assume $U_k^{(n)}\in\SU(2)$ for any $k$, this is not necessary in order to find a convergent sequence. However, it is highly beneficial to do so since the recursive equations below are much easier to solve given our assumptions on $\lambda_{jl}$. Instead, we could provide some general form using functions $\mathcal{A}_k,\mathcal{B}_k,\mathcal{C}_k,\mathcal{D}_k$:
\[U_{k+1}^{(n)}=\begin{psmallmatrix}
    a_k\mathcal{A}_k&\overline{b_k}\mathcal{C}_k\\
    b_k\mathcal{B}_k&\overline{a_k}\mathcal{D}_k
\end{psmallmatrix}.\]
and we would have a set of 4 recursive equations to solve simultaneously. This approach would work when analyzing related recursive sequences of matrices.
\end{remark}
\section{Deriving Necessary Values for Convergence}\label{section4}
In order for a diagonalizing sequence to have the property in Conjecture \ref{conj:convergence}, the coefficients $\beta_j$ must take specific numerical values. These values simplify the expression for $\mathcal{B}_k$ in equation \ref{eq:ansatz} to $\mathcal{B}_k=\beta_0|b_k|^{2n}$, which implies the property since $|b_{k+1}|=|\mathcal{B}_kb_k|=|b_k|^{2n+1}$ ($\beta_0=(-1)^{\frac{n\left(n+3\right)}{2}}$, which will be shown in Theorem \ref{eq:j0}). In this section we derive what those values are (Theorem \ref{cor:values_vi}), and first we show how the expression simplifies via the relation $|a_k|^2=1-|b_k|^2$.
\begin{prop}\label{prop:necessary}
Suppose that for some order $n$ and $\mathcal{B}_k$ given by the ansatz in Equation \ref{eq:ansatz}, we substitute $|a_k|^2=1-|b_k|^2$ in every instance of $|a_k|^2$. Then $\mathcal{B}_k$ takes the form
\[\mathcal{B}_k=\beta_0[1+(1-|b_k|^2)(A_1+A_2|b_k|^2+\ldots+A_{n}|b_{k}|^{2n-2})],\]
where each $A_i$ is some integral linear combination of the coefficients $\beta_j/\beta_0$.
Given this form, $\mathcal{B}_k = \beta_0|b_k|^{2n}$ for any $b_k\neq 0$ if and only if $A_i=-1$ for each $i$.
\end{prop}
\begin{proof}
Suppose that $A_i=-1$ for all $i$, then
\begin{align*}
    \mathcal{B}_k&=\beta_0[1+(1-|b_k|^{2})(-1-|b_k|^2-\ldots-|b_k|^{2n-2})]\\
    &=\beta_0[1-(1-|b_k|^{2})(1+|b_k|^2+\ldots+|b_k|^{2n-2})]\\
    &= \beta_0[1-(1-|b_k|^{2n})]=\beta_0|b_k|^{2n} .
\end{align*}
Supposing that $\mathcal{B}_k=\beta_0|b_k|^{2n}$, we work backward through these equations to the first line.
Therefore we conclude that $-1-|b_k|^2-\ldots-|b_k|^{2n-2}=A_1+A_2|b_k|^2+\ldots+A_n|b_k|^{2n-2}$, and this implies that $A_i=-1$ for all $i$ since this identity holds for $|b_k|\in(0,1]$.
\end{proof}
Note that in the above proof, if $b_k=0$, then only $A_1=-1$ is required to guarantee that $\mathcal{B}_k = \beta_0|b_k|^{2n}=0$. This is not a problem for the stated conjecture since $b_k=0$ implies that $U_k$ is a diagonal matrix, so $U_{k+1}$ is also diagonal and $|b_{k+1}|=|b_k|^{2n+1} = 0$.

Now we want to derive what the coefficients of the combinations $A_i$ are, and to do so we will make the exact substitution suggested in the proposition.
Before stating the theorem, here is a low-order example to demonstrate how these combinations arise.
\begin{example}\label{ex:n4matrix}
For order 4 we apply the relation $|a_k|^2+|b_k|^2=1$ to obtain $\mathcal{B}_k$ in terms of the coefficients and powers of $|b_k|^2$:
\begin{align*}
    \mathcal{B}_k&=\beta_0+|a_k|^2(\beta_1-|b_k|^2(\beta_2+|a_k|^2(\beta_3-|b_k|^2\beta_4)))\\
    &=\beta_0+(1-|b_k|^2)(\beta_1+\beta_2(-|b_k|^2)+\beta_3(-|b_k|^2+|b_k|^4)+\beta_4(|b_k|^4-|b_k|^6)).
\end{align*}
Defining $v_j \equiv \beta_j/\beta_0$, $\mathbf{v}\equiv(v_1,v_2,v_3,v_4)$, and grouping the coefficients in front of each power of $|b_k|^2$, we get
\[\mathcal{B}_k = \beta_0\left[1+(1-|b_k|^2)(v_1 + |b_k|^2(-v_2-v_3)+|b_k|^4(v_3+v_4)+|b_k|^6(-v_4))\right].\]
Suppose that these constants $v_j$ for $j=1,2,3,4$ happened to satisfy the linear system
\begin{equation}
\begin{aligned}
    v_1&=-1,\\
    -v_2-v_3&=-1,\\
    v_3+v_4&=-1,\\
    v_4&=-1,
\end{aligned}\quad \iff\quad 
    \begin{bmatrix}
1&0&0&0\\0&-1&-1&0\\0&0&1&1\\0&0&0&-1
\end{bmatrix}
\begin{bmatrix}
v_1\\v_2\\v_3\\v_4
\end{bmatrix}=
\begin{bmatrix}
-1\\-1\\-1\\-1
\end{bmatrix}.
\end{equation}
Then we would be able to simplify the above to be
\[\mathcal{B}_k = \beta_0\left[1+(1-|b_k|^2)(-1 - |b_k|^2-|b_k|^4-|b_k|^6)\right]=\beta_0\left[1+(|b_k|^8-1)\right]=\beta_0|b_k|^8.\]
Denoting this matrix as $M_4$, we note that this is an upper triangular matrix with determinant 1, so it is invertible.
Denote each row of $M_4$ as $\mathbf{r}_i^T$; we identify the rows as the tuple of coefficients in the combination $A_i$, such that $\mathbf{r}_i^T\mathbf{v}=A_i=-1$.
As it turns out, if $\mathcal{B}_k$ did happen to simplify down to $\beta_0|b_k|^8$, then by the invertible nature of the system we must have that $\mathbf{v}$ is given by $\mathbf{v}=M_4^{-1}(-\mathbf{e})=(-1,3,-2,1)$, where $\mathbf{e}=(1,1,1,1)$. This implies a direct link between the values of the coefficients $\beta_j$ and the property from Conjecture \ref{conj:convergence}.
\end{example}
Now we will show what the general form of $M_n$ for any order $n$ is by construction.
\begin{defn}\label{def:matrix}
For a given $n\in\mathbb{N}$, define the matrix $M_n\in\mathbb{R}^{n\times n}$ such that its entries are given as $[M_n]_{ii}=(-1)^{i+1}$ and
\begin{gather}
[{M}_n]_{(2m-1-j),(2m-1)}=(-1)^{j}\binom{m-1}{j},\\
[{M}_n]_{(2m-j),(2m)}=(-1)^{j+1}\binom{m-1}{j},
\end{gather}
for $j,m\geq 1$. All other elements of the matrix are 0.
\end{defn}
Here is an explicit calculation of $M_n$ for a higher order, also including its inverse.
\begin{example}
One interesting observation is that ${M}_{m}$ contains ${M}_{n}$ as a nested sub-block for $m>n$. We will solve the matrix system by finding the inverse matrix $M_n^{-1}$, and it helps to have an example to see the patterns in the entries. Explicitly calculating $M_{10}$ yields the matrix
\begin{equation}
    {M}_{10}=\begin{bmatrix}
1&0&0&0&0&0&0&0&0&0\\
0&-1&-1&0&0&0&0&0&0&0\\
0&0&1&1&1&0&0&0&0&0\\
0&0&0&-1&-2&-1&-1&0&0&0\\
0&0&0&0&1&2&3&1&1&0\\
0&0&0&0&0&-1&-3&-3&-4&-1\\
0&0&0&0&0&0&1&3&6&4\\
0&0&0&0&0&0&0&-1&-4&-6\\
0&0&0&0&0&0&0&0&1&4\\
0&0&0&0&0&0&0&0&0&-1\\
\end{bmatrix},
\end{equation}
which contains $M_4$ as a sub-block in the upper left corner. We can also calculate the inverse of $M_{10}$ to be
\begin{equation}
    {M}_{10}^{-1}=\begin{bmatrix}
1&0&0&0&0&0&0&0&0&0\\
0&-1&-1&-1&-1&-1&-1&-1&-1&-1\\
0&0&1&1&1&1&1&1&1&1\\
0&0&0&-1&-2&-3&-4&-5&-6&-7\\
0&0&0&0&1&2&3&4&5&6\\
0&0&0&0&0&-1&-3&-6&-10&-15\\
0&0&0&0&0&0&1&3&6&10\\
0&0&0&0&0&0&0&-1&-4&-10\\
0&0&0&0&0&0&0&0&1&4\\
0&0&0&0&0&0&0&0&0&-1\\
\end{bmatrix}.
\end{equation}
We should expect in general that the entries of $M_n$ and $M_n^{-1}$ are given in terms of binomial coefficients.
\end{example}
\subsection{Solving the Matrix System}
Here we show that the combinations $A_i$ can be expressed explicitly by the $i$th row of the matrix-vector product $M_n\mathbf{v}$ for any $n$.

\begin{prop}\label{prop:matrix}
Denote $\mathbf{e}\in\mathbb{R}^n$ as the vector of all ones and $\mathbf{v}=(v_1,\ldots,v_n)$, where $v_j=\beta_j/\beta_0$ for each $j$. Consider the expression for $\mathcal{B}_k$ from equation \ref{eq:ansatz}, and replace $|a_k|$ with $1-|b_k|^2$. Then the integral linear combinations $A_i$ from Proposition \ref{prop:necessary} are given by the $i$th entry of the vector $M_n\mathbf{v}$.

\end{prop}
\begin{proof}
Using the binomial theorem on $(-|a_k|^2)^{\left(j-1\right)}$ yields
$$(-|a_k|^2)^{(j-1)}=(|b_k|^2-1)^{j-1}=\sum _{\ell=0}^{j-1}\binom{j-1}{\ell}(-1)^{\ell}|b_k|^{2\left(j-1-\ell\right)}.$$
Plugging this into the sum expression for the ansatz gives us
\begin{equation*}
\begin{aligned}
    &\implies\mathcal{B}_k=\beta_0+|a_k|^2\left(\sum_{j=1}^{\lfloor{n/2}\rfloor}(-|b_k|^2|a_k|^2)^{j-1}(\beta_{2j-1}-|b_k|^2\beta_{2j})\right)\\
    &=\beta_0\left[1+(1-|b_k|^2)\sum_{j=1}^{\lfloor{n/2}\rfloor}\sum _{\ell=0}^{j-1}\binom{j-1}{\ell}(-1)^{\ell}(|b_k|^2)^{2(j-1)-\ell}(v_{2j-1}-|b_k|^2v_{2j})\right].
\end{aligned}
\end{equation*}
From this we can see that in general $[{M}_n]_{ii}=1$ for odd $i$ and $-1$ for even $i$ (this corresponds to $\ell=0$ in the summation above). Consider the terms in this summation with a factor $\beta_{2m-1}/\beta_0$ in front, where $m$ is an integer. Then $m=j$ in this sum, and so the terms with this are of the form $v_{2m-1}\binom{m-1}{\ell}(-1)^\ell (|b_k|^2)^{2\left(m-1\right)-\ell}$, where $\ell=0,\ldots,m-1$. For terms with a factor $v_{2m}$, we have $m=j$ again and the form of these terms are $v_{2m}\binom{j-1}{\ell}(-1)^{\ell+1} (|b_k|^2)^{2(m-1)-\ell}$. Therefore the matrix element $[M_n]_{ij}$ describes the term with the factor $v_j|b_k|^{2(i-1)}$, and so we have $[M_n]_{(2m-1-\ell),(2m-1)}=(-1)^\ell\binom{m-1}{\ell}$ and $[M_n]_{(2m-\ell),(2m)}=(-1)^{\ell+1}\binom{m-1}{\ell}$. Therefore $M_n$ gives the set of coefficients in the combinations $A_i$, with the $i$th row corresponding to terms with the factor $|b_k|^{2\left(i-1\right)}$. Therefore $M_n\mathbf{v}=(A_1,\ldots,A_n)$.
\end{proof}
By Proposition \ref{prop:necessary}, we can see that $M_n\mathbf{v}=(A_1,\ldots,A_n)=(-1,\ldots,-1)=-\mathbf{e}$, and so $\mathcal{B}_k=\beta_0|b_k|^{2n}$ if and only if $M_n\mathbf{v}=-\mathbf{e}$, which is equivalent to $\mathbf{v}=-M_n^{-1}\mathbf{e}$. Now we determine what the inverse matrix $M_n^{-1}$ is.
\begin{lemma}\label{lem:inv}
The inverse matrix ${M}_n^{-1}$ has the following elements: $[{M}_n^{-1}]_{jj}=(-1)^{j+1}$ and the subsequent rows can be described as, for $i\in\mathbb{N}$ and $j\geq 0$:
\begin{gather}
    [{M}_n^{-1}]_{(2i),(2i+j)}=-\binom{j+i-1}{j},\\
    [{M}_n^{-1}]_{(2i+1),(2i+1+j)}=\binom{j+i-1}{j}.
\end{gather}
The rest of the elements are 0.
\end{lemma}
\begin{proof}
We can also rewrite the elements of the inverse by substituting $j\to n-2i-j$:
\begin{align*}
    [{M}_n^{-1}]_{(2i),(n-j)}&=-\binom{n-i-j-1}{n-2i-j}=-\binom{n-i-j-1}{i-1},\\
    [{M}_n^{-1}]_{(2i+1),(n-j)}&=\binom{n-i-j-2}{n-2i-j-1}=\binom{n-i-j-2}{i-1}.
\end{align*}
Here we used the binomial identity $\binom{n}{k}=\binom{n}{n-k}$. We want to show that the product of the matrices is $X={M}_n^{-1}{M}_n=I_n$, where $I_n$ is the $n\times n$ identity matrix. For an inductive argument, assume that the formula is true for ${M}_n$. Then we can write ${M}_{n+1}$ as
$${M}_{n+1}=\begin{bmatrix}
{M}_n&*\\
0&(-1)^{n}
\end{bmatrix},$$
where $*$ is all the extra elements that would be included above the diagonal in the last column. Note that if the formula is valid for $M_n$, then it must be true for $M_{n'}$ for $n'<n$. Therefore the base cases $n=1,2,3,4$ have already been verified by Example \ref{ex:n4matrix}. The inverse is also an upper triangular matrix, and we know that the product of upper triangular matrices are upper triangular, so the inverse looks like 
$${M}_{n+1}^{-1}=\begin{bmatrix}
{M}_n^{-1}&*\\
0&(-1)^{n}
\end{bmatrix}.$$
And of course, the last element on the diagonal is $X_{n+1,n+1}=1$ since the negative signs cancel. It remains to just show the last column above the diagonal is zeros. We split this into two cases, whether $n$ is odd or even. In either case, the first entry in the last column yields $X_{1,n+1}=0$ because the only nonzero entry in the first row of $M_{n+1}^{-1}$ is the first entry, but the first entry of the last column in $M_{n}$ is always 0. Now we split the remaining entries into two cases.
\begin{enumerate}
    \item If $n=2m$ for some integer $m$, then the last column comes in the form
    $$[{M}_{n+1}]_{(2m+1-j),(2m+1)}=(-1)^{j}\binom{m-1}{j},$$
    and we want to multiply this column by every row in ${M}_{n+1}^{-1}$. For even rows $2i$, this looks like
    \begin{equation*}
        \begin{aligned}
            X_{2i,n+1}&=-\sum_{j=0}^{n}(-1)^{j}\binom{m-1}{j}\binom{n-i-j}{i-1}\\
            &=-\sum_{j=0}^{m-1}(-1)^{j}\binom{m-1}{j}\binom{2m-i-j}{i-1}.
        \end{aligned}
    \end{equation*}
    For odd rows $2i+1$ it looks like
    \begin{equation*}
        \begin{aligned}
            X_{2i+1,n+1}&=\sum_{j=0}^{n}(-1)^{j}\binom{m-1}{j}\binom{n-i-j-1}{i-1}\\
            &=\sum_{j=0}^{m-1}(-1)^{j}\binom{m-1}{j}\binom{2m-i-j-1}{i-1}.
        \end{aligned}
    \end{equation*}
    In both cases we are looking at $1\leq i<m$. 
    \item If $n=2m-1$ for some integer $m$, then we get very similar situations:
    $$[{M}_{n+1}]_{(2m-j),(2m)}=(-1)^{j+1}\binom{m-1}{j},$$
    and we want to multiply this column by every row in ${M}_{n+1}^{-1}$. For even rows $2i$, this looks like
    \begin{equation*}
        \begin{aligned}
            X_{2i,n+1}&=-\sum_{j=0}^{n}(-1)^{j+1}\binom{m-1}{j}\binom{n-i-j}{i-1}\\
            &=\sum_{j=0}^{m-1}(-1)^{j}\binom{m-1}{j}\binom{2m-i-j-1}{i-1}.
        \end{aligned}
    \end{equation*}
    For odd rows $2i+1$ it looks like
    \begin{equation*}
        \begin{aligned}
            X_{2i+1,n+1}&=\sum_{j=0}^{n}(-1)^{j+1}\binom{m-1}{j}\binom{n-i-j-1}{i-1}\\
            &=-\sum_{j=0}^{m-1}(-1)^{j}\binom{m-1}{j}\binom{2m-i-j-2}{i-1}.
        \end{aligned}
    \end{equation*}
    Note that the second binomials in each sum are of the form $\binom{\mu+\nu-1}{\mu-1}=\binom{\mu+\nu-1}{\nu}=\mu(\mu+1)\ldots(\mu+\nu-1)/\nu!$, which are polynomials in $\mu$ of degree $\nu$. In each of the sums above, $\nu=i-1$ and $\mu=2m-i-j-R+2$, where $R=0,1,$ or 2 depending on the sum. Therefore $\binom{\mu+\nu-1}{\nu}$ is a polynomial in $j$ of degree $i-1$.
    If $P(j)$ is a polynomial in $j$ with degree less than or equal to $d$, then\cite{Ruiz1996}
    \begin{equation}\label{eq:polynom}
    {\displaystyle \sum _{j=0}^{d}(-1)^{j}{\binom {d}{j}}P(j)=0.}
    \end{equation}
    Substitute $d= m-1$ and $P(j)=\binom{2m-i-j-R}{i-1}$. Since $P(j)$ is a polynomial with degree $i-1\leq m-1$, all of the sums above are equal to 0, and hence the last column is 0 above the diagonal whether $n$ is odd or even, and so by induction $X=I_n$ holds for all $n$. So ${M}_n^{-1}$ is the inverse of the matrix ${M}_n$.
\end{enumerate}
\end{proof}
From this we can solve the system of equations $M_n\mathbf{v}=-\mathbf{e}$ for $\mathbf{v}$, given as $\mathbf{v}=-M_n^{-1}\mathbf{e}$.
\begin{thm}\label{cor:values_vi}
For an order $n$, the given diagonalizing sequence has the property of Conjecture \ref{conj:convergence} if and only if for all $j$ the coefficients $\beta_j$ from Proposition \ref{prop:ansatz} take the values $\beta_1=-\beta_0$, $\beta_{2i}=\beta_0\binom{n-i}{i}$, and $\beta_{2i+1}=-\beta_0\binom{n-i-1}{i}$, where $i\in\mathbb{N}$.
\end{thm}
\begin{proof}
By Proposition \ref{prop:necessary}, the conjectured property $|b_{k+1}|=|b_k|^{2n+1}$ holds if and only if each combination $A_i=-1$, which is equivalent to $M_n\mathbf{v}=-\mathbf{e}$ since $A_i$ is the $i$th entry of $M_n\mathbf{v}$ by Proposition \ref{prop:matrix}.
The vector $\mathbf{v}$ is given by $-M_n^{-1}\mathbf{e}$, with $M_n^{-1}$ given by Lemma \ref{lem:inv}. We can use the well-known binomial identities $\sum_{n=0}^{N}\binom{m+n}{n}=\binom{N+m+1}{N}$ and $\binom{n}{k}=\binom{n}{n-k}$ to simplify the expressions:
\begin{equation}\label{eq:binom}
\begin{aligned}
v_1&=-1,\quad v_{2i}=\sum_{j=0}^{n-2i}\binom{j+i-1}{j}=\binom{n-i}{n-2i}=\binom{n-i}{i},\\
v_{2i+1}&=-\sum_{j=0}^{n-2i-1}\binom{j+i-1}{j}=-\binom{n-i-1}{n-2i-1}=-\binom{n-i-1}{i}.
\end{aligned}
\end{equation}
Using the fact that $v_j=\beta_j/\beta_0$ yields the expressions for $\beta_j$ in the theorem statement.
\end{proof}
This theorem establishes the direct equivalence between the property from Conjecture \ref{conj:convergence} and the exact numerical values that the $\beta_j$'s take.
\section{Using Recursive Equations to Solve Basic Sequences}\label{section5}
We can directly compute formulas for $\beta_j$ and $\alpha_j$ in terms of $\lambda_{i0}$ and $\lambda_{i1}$ for $i=1,\ldots,n$ and a given $n$ by expanding $U_{k+1}^{(n)}$ through matrix multiplication. However, in general it is much simpler to use the recursive equations \ref{eq:alpha} and \ref{eq:beta}. Here we compute some basic cases ($j=0,1$, and $n$), and at the end we will show the explicit formulas for $\beta_j$ and $\alpha_j$ for arbitrary $j$ and $n$ in terms of the $\lambda_{jl}$'s in Theorem \ref{thm:general_alpha_beta}. We also show in these basic cases that they take on the values given from Theorem \ref{cor:values_vi}, confirming the conjecture at least partially.


For $j=0$, the recursive equations \eqref{eq:alpha} and \eqref{eq:beta} reduce to
\begin{align*}
    \alpha_0^{(n+1)}&=\lambda_{10}^2{T}{\alpha}_{0}^{(n)},\\
    \beta_0^{(n+1)}&=\lambda_{10}\lambda_{11}{T}{\beta}_0^{(n)},
\end{align*}
and here we will solve these equations.
\begin{thm}\label{eq:j0}
For any order $n$, $\alpha_0$ and $\beta_0$ are explicitly given by the formulas
\begin{equation}
    \alpha_0=\prod_{k=1}^{n}\lambda_{k0}^2=\omega^{\frac{1}{2}(2 (-1)^n n + (-1)^n - 1)},\quad{\beta}_0 =\prod_{k=1}^{n}\lambda_{k0}\lambda_{k1}=(-1)^{\frac{n\left(n+3\right)}{2}}.
\end{equation}
\end{thm}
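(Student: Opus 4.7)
The plan is to prove each formula in two stages: first establish the product representation from the recurrence by induction on $n$, then evaluate each product in closed form using the explicit definition of $\lambda_{jl}$ from \eqref{eq:lambda}.

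For the product form, I would start by computing the base case $n=1$ directly from $U_{k+1}^{(1)}=U_kD_1U_k^{-1}D_1U_k$; a short matrix multiplication gives $\mathcal{A}_k^{(1)}=\lambda_{10}^2-|b_k|^2(\lambda_{10}-\lambda_{11})^2$ and $\mathcal{B}_k^{(1)}=\lambda_{10}\lambda_{11}+|a_k|^2(\lambda_{10}-\lambda_{11})^2$, whence $\alpha_0^{(1)}=\lambda_{10}^2$ and $\beta_0^{(1)}=\lambda_{10}\lambda_{11}$. These are exactly the $k=1$ case of the asserted products. For the inductive step, the recurrences $\alpha_0^{(n+1)}=\lambda_{10}^2\,T\alpha_0^{(n)}$ and $\beta_0^{(n+1)}=\lambda_{10}\lambda_{11}\,T\beta_0^{(n)}$ are exactly what we need: since $T$ substitutes $\lambda_{jl}\mapsto\lambda_{(j+1),l}$, applying $T$ to $\prod_{k=1}^{n}\lambda_{k0}^2$ gives $\prod_{k=2}^{n+1}\lambda_{k0}^2$, and multiplying by the prefactor $\lambda_{10}^2$ yields $\prod_{k=1}^{n+1}\lambda_{k0}^2$. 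The analogous argument works for $\beta_0^{(n+1)}$.

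For the closed-form evaluations, I would substitute $\lambda_{k0}=\omega^{(-1)^{k}k}$ so that
\[
\alpha_0^{(n)}=\prod_{k=1}^{n}\omega^{2(-1)^{k}k}=\omega^{2\sum_{k=1}^{n}(-1)^{k}k},
\]
and then reduce the alternating sum $S_n=\sum_{k=1}^{n}(-1)^{k}k$ to closed form by a standard parity split (or one-line induction), obtaining $S_n=(-1)^{n}\lceil n/2\rceil=\tfrac{1}{4}\bigl(2(-1)^{n}n+(-1)^{n}-1\bigr)$, which can be verified on $n=1,2$ and then by induction $S_{n+1}=S_n+(-1)^{n+1}(n+1)$. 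Multiplying by $2$ gives the stated exponent. For $\beta_0^{(n)}$, I would first observe the key identity $\lambda_{k0}\lambda_{k1}=(-1)^{k+1}\omega^{(-1)^{k}k+(-1)^{k+1}k}=(-1)^{k+1}$, since the exponents cancel. Then
\[
\beta_0^{(n)}=\prod_{k=1}^{n}(-1)^{k+1}=(-1)^{\sum_{k=1}^{n}(k+1)}=(-1)^{n(n+1)/2+n}=(-1)^{n(n+3)/2}.
\]

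There is essentially no hard step here: the matrix-algebra base case is mechanical, the inductive product formula is immediate from how $T$ interacts with a telescoping product, and both evaluations reduce to elementary arithmetic identities. The only mild subtlety is keeping track of signs when evaluating $S_n$ and when simplifying $\lambda_{k0}\lambda_{k1}$; I would include the one-line verification $(-1)^{k}k+(-1)^{k+1}k=0$ explicitly to make the cancellation transparent, and present $S_n$ in the exact symmetric form $\tfrac{1}{4}(2(-1)^{n}n+(-1)^{n}-1)$ so that the final $\omega$-exponent matches the statement verbatim.
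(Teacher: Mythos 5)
Your proposal is correct and follows essentially the same route as the paper: verify the base case $n=1$ by direct matrix multiplication, establish the product form by induction using the shift operator $T$ applied to the $j=0$ recurrences, and then evaluate the products in closed form by substituting the roots of unity (with the key observation $\lambda_{k0}\lambda_{k1}=(-1)^{k+1}$ for $\beta_0$ and the alternating sum $\sum_{k=1}^n(-1)^kk$ for $\alpha_0$). The only difference is that you spell out the evaluation of the alternating sum in slightly more detail, which the paper simply asserts.
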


\begin{proof}
For $n=1$, explicitly calculating the sequence gives $\beta_0=\lambda_{10}\lambda_{11}$ and $\alpha_0=\lambda_{10}^2$, and so the above satisfies the initial conditions. Applying the shift $T$, we have
$$\lambda_{10}^2{T}{\alpha}_0^{(n)}=\lambda_{10}^2\prod_{k=2}^{n+1}\lambda_{k0}^2=\prod_{k=1}^{n+1}\lambda_{k0}^2={\alpha}_0^{(n+1)}.$$
$$\lambda_{10}\lambda_{11}{T}{\beta}_0^{(n)}= \lambda_{10}\lambda_{11}\prod_{k=2}^{n}\lambda_{k0}\lambda_{k1}=\prod_{k=1}^{n+1}\lambda_{k0}\lambda_{k1}={\beta}_0^{(n+1)}.$$
Therefore the formulas above solve the recursive equations. Since $\lambda_{j0}=\omega^{\left(-1\right)^{j}j}$ and $\lambda_{j1}=(-1)^{j+1}\omega^{\left(-1\right)^{j+1}j}$, we have that
\begin{gather*}
    \alpha_0=\prod_{k=1}^{n}\omega^{2(-1)^kk}=\omega^{2\sum_{k=1}^{n}(-1)^kk}=\omega^{\frac{1}{2}(2 (-1)^n n + (-1)^n - 1)},\\
    {\beta}_0 =\prod_{k=1}^{n}(-1)^{k+1}=(-1)^{\sum_{k=1}^{n}k+1}=(-1)^{\frac{n\left(n+3\right)}{2}}.
\end{gather*}
\end{proof}
For $n=1,2,3,4$, we have $\alpha_0^{(n)}=\omega^{-2},\omega^{2},\omega^{-4},\omega^{4}$ and $\beta_0^{(n)}=1,-1,-1,1$, and the pattern continues in this fashion. For $m\in\mathbb{N}$, we can see that $\alpha_0^{(2m)}=\omega^{2m}$, $\alpha_0^{(2m-1)}=\omega^{-2m}$, $\beta_0^{(2m)}=(-1)^m$, and $\beta_0^{(2m-1)}=(-1)^{m+1}$. 

The next is $j=n$, which is simple to examine as well. For any pair $(n,j)$, we note that $\alpha_j^{(n)}=\beta_j^{(n)}=0$ for $j>n$.
\begin{align*}
    \alpha_{n+1}^{(n+1)}&=\lambda_{10}^2{T}{\alpha}_{n+1}^{(n)}+\lambda_{10}^2({T}{\alpha}_{n}^{(n)}-\chi_{n+1}{T}{\alpha}_{n+1}^{(n)})\\
    &+\lambda_{11}^2\left(\overline{{T}{\alpha}_{n}^{(n)}}-\chi_{n+1}\overline{{T}{\alpha}_{n+1}^{(n)}}\right)-2\lambda_{10}\lambda_{11}{T}{\beta}_{n}^{(n)}\\
    &=\lambda_{10}^2{T}{\alpha}_{n}^{(n)}+\lambda_{11}^2\overline{{T}{\alpha}_{n}^{(n)}}-2\lambda_{10}\lambda_{11}{T}{\beta}_{n}^{(n)},\\
    \beta_{n+1}^{(n+1)}&=\lambda_{10}\lambda_{11}{T}{\beta}_{n+1}^{(n)}+\lambda_{10}^2{T}{\alpha}_{n}^{(n)}+\lambda_{11}^2\overline{{T}{\alpha}_{n}^{(n)}}-2\lambda_{10}\lambda_{11}\left({T}{\beta}_{n}^{(n)}+\chi_{n+1}{T}{\beta}_{n+1}^{(n)}\right)\\
    &=\lambda_{10}^2{T}{\alpha}_{n}^{(n)}+\lambda_{11}^2\overline{{T}{\alpha}_{n}^{(n)}}-2\lambda_{10}\lambda_{11}{T}{\beta}_{n}^{(n)}.
\end{align*}
All of the cancelling above comes from the assumption that we gave before, that all functions are zero for $j>n$.
We claim that both $\alpha_n$ and $\beta_n$ take the same form for order $n$.
\begin{thm}\label{eq:jn}
For any order $n$, both $\beta_n$ and $\alpha_n$ are given by the same expression
\begin{equation}
    \alpha_n=\beta_n=\prod_{k=1}^{n}(\lambda_{k0}-\lambda_{k1})^2=(-1)^{\frac{n\left(n+1\right)}{2}}=(-1)^n\beta_0.
\end{equation}
\end{thm}
\begin{proof}
Note that $(\lambda_{j0}-\lambda_{j1})^2=\lambda_{j0}^2+\lambda_{j1}^2-2\lambda_{j0}\lambda_{j1}$ is real, as $\lambda_{j0}^2=\omega^{2(-1)^jj}$ is the conjugate of $\lambda_{j1}^2=\omega^{2(-1)^{j+1}j}$, and $\lambda_{j0}\lambda_{j1}=(-1)^{j+1}$. The base case $\alpha_1=\beta_1=(\lambda_{10}-\lambda_{11})^2$ confirms the theorem for $n=1$, and by induction we can see that $\alpha_n=\beta_n$ implies that $\alpha_{n+1}=\beta_{n+1}$, therefore $\alpha_n=\beta_n$ for all $n$. The recursive equations give us
\begin{equation*}
    \begin{aligned}
        \beta_{n+1}&=\lambda_{10}^2{T}{\alpha}_{n}+\lambda_{11}^2\overline{{T}{\alpha}_{n}}-2\lambda_{10}\lambda_{11}{T}{\beta}_{n}\\
        &=(\lambda_{10}^2+\lambda_{11}^2-2\lambda_{10}\lambda_{11})\prod_{k=2}^{n+1}(\lambda_{k0}-\lambda_{k1})^2\\
        &=\prod_{k=1}^{n+1}(\lambda_{k0}-\lambda_{k1})^2.
    \end{aligned}
\end{equation*}
Therefore the expression above satisfies the recursive equations. Plugging in the roots of unity gives the identity
\begin{equation*}
    \begin{aligned}
        \prod_{k=1}^{n}(\lambda_{k0}-\lambda_{k1})^2&=\prod_{k=1}^{n}\left(\omega^{(-1)^{k}k}+(-1)^k\omega^{(-1)^{k+1}k}\right)^2\\
        &=\prod_{k=1}^{n}\left(\omega^{2(-1)^{k}k}+\omega^{2(-1)^{k+1}k}+2(-1)^k\right)\\
        &=\prod_{k=1}^{n}\left(2\cos(k\theta)+2(-1)^k\right)=\left(-1\right)^{\frac{n\left(n+1\right)}{2}}.
    \end{aligned}
\end{equation*}
The identity above is proved in \ref{app:trig}. Dividing by $\beta_0$ is the same as multiplying by it, and so we get that $v_n$ is equal to
$$v_n=\beta_n/\beta_0=(-1)^{\frac{n\left(n+1\right)}{2}}(-1)^{\frac{n\left(n+3\right)}{2}}=(-1)^n.$$
\end{proof}
Now we consider $j=1$, in which case our inductive sequences are
\begin{align*}
    \alpha_1^{(n+1)}&=\lambda_{10}^2{T}{\alpha}_{0}^{(n)}+\lambda_{11}^2\overline{{T}{\alpha}_{0}^{(n)}}-2\lambda_{10}\lambda_{11}{T}{\beta}_{0}^{(n)}-\lambda_{11}^2\overline{{T}{\alpha}_{1}^{(n)}},\\
    \beta_1^{(n+1)}&=\lambda_{10}^2{T}{\alpha}_{0}^{(n)}+\lambda_{11}^2\overline{{T}{\alpha}_{0}^{(n)}}-2\lambda_{10}\lambda_{11}{T}{\beta}_{0}^{(n)}-\lambda_{10}\lambda_{11}{T}{\beta}_1^{(n)}.
\end{align*}
However, we already know the first three terms using our results from Theorem \ref{eq:j0} for $\alpha_0$ and $\beta_1$, so we can write
\begin{align*}
    \alpha_1^{(n+1)}&=\prod_{k=1}^{n+1}\lambda_{k0}^2+\prod_{k=1}^{n+1}\lambda_{k1}^2-2\prod_{k=1}^{n+1}\lambda_{k0}\lambda_{k1}-\lambda_{11}^2\overline{{T}{\alpha}_{1}^{(n)}}\\
    &=\left(\prod_{k=1}^{n+1}\lambda_{k0}-\prod_{k=1}^{n+1}\lambda_{k1}\right)^2-\lambda_{11}^2\overline{{T}{\alpha}_{1}^{(n)}},\\
    \beta_1^{(n+1)}&=\prod_{k=1}^{n+1}\lambda_{k0}^2+\prod_{k=1}^{n+1}\lambda_{k1}^2-2\prod_{k=1}^{n+1}\lambda_{k0}\lambda_{k1}-\lambda_{10}\lambda_{11}{T}{\beta}_1^{(n)}\\
    &=\left(\prod_{k=1}^{n+1}\lambda_{k0}-\prod_{k=1}^{n+1}\lambda_{k1}\right)^2-\lambda_{10}\lambda_{11}{T}{\beta}_1^{(n)}.
\end{align*}
Here we will solve these equations for $\alpha_1$ and $\beta_1$.
\begin{thm}
For any order $n$, $\alpha_1$ and $\beta_1$ are given by the formulas
\begin{equation}
    \alpha_1
    =\left(\prod_{k=1}^{n}\lambda_{k0}-\prod_{k=1}^{n}\lambda_{k1}\right)^2+\sum_{\ell=1}^{n-1}(-1)^{\ell}\prod_{k=1}^{\ell}\lambda_{k\chi_{k}}^{2}\left(\prod_{k=\ell+1}^{n}\lambda_{k0}-\prod_{k=\ell+1}^{n}\lambda_{k1}\right)^2,
\end{equation}
\begin{equation}
    \beta_1
    =\left(\prod_{k=1}^{n}\lambda_{k0}-\prod_{k=1}^{n}\lambda_{k1}\right)^2+\sum_{\ell=1}^{n-1}(-1)^{\ell}\prod_{k=1}^{\ell}\lambda_{k0}\lambda_{k1}\left(\prod_{k=\ell+1}^{n}\lambda_{k0}-\prod_{k=\ell+1}^{n}\lambda_{k1}\right)^2.
\end{equation}
\end{thm}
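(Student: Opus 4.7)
The plan is to prove both identities by induction on $n$, leveraging the two recursions for $\alpha_1^{(n+1)}$ and $\beta_1^{(n+1)}$ that are displayed immediately above the theorem. The base case $n=1$ asserts $\alpha_1^{(1)} = \beta_1^{(1)} = (\lambda_{10}-\lambda_{11})^2$ (the outer sum being empty), which I would verify by directly expanding $U_k D_1 U_k^{-1} D_1 U_k$: the $(1,1)$ entry factors as $a_k(\lambda_{10}^2 - |b_k|^2(\lambda_{10}-\lambda_{11})^2)$ and the $(2,1)$ entry as $b_k(\lambda_{10}\lambda_{11} + |a_k|^2(\lambda_{10}-\lambda_{11})^2)$, which matches the ansatz for $\mathcal{A}_k^{(1)}$ and $\mathcal{B}_k^{(1)}$ with precisely these values of $\alpha_1^{(1)}$ and $\beta_1^{(1)}$.

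For the inductive step on $\beta_1^{(n+1)}$, I apply the translation operator $T$ (which shifts every $\lambda$ index up by one) to the assumed formula, producing summands of the form $\prod_{k=2}^{\ell+1}\lambda_{k0}\lambda_{k1} \cdot (\prod_{k=\ell+2}^{n+1}\lambda_{k0} - \prod_{k=\ell+2}^{n+1}\lambda_{k1})^2$ together with a leading term $(\prod_{k=2}^{n+1}\lambda_{k0} - \prod_{k=2}^{n+1}\lambda_{k1})^2$. Multiplying by $-\lambda_{10}\lambda_{11}$ absorbs the missing $k=1$ factor into each product, promoting $\prod_{k=2}^{\ell+1}\lambda_{k0}\lambda_{k1}$ to $\prod_{k=1}^{\ell+1}\lambda_{k0}\lambda_{k1}$ and flipping each sign to $(-1)^{\ell+1}$. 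Reindexing $\ell' = \ell + 1$ shows that the (new) leading term fills exactly the $\ell' = 1$ slot, consolidating everything into a single sum from $\ell' = 1$ to $n$. Adding the $(\prod_{k=1}^{n+1}\lambda_{k0} - \prod_{k=1}^{n+1}\lambda_{k1})^2$ contributed by the recursion yields the target formula at level $n+1$.

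The inductive step for $\alpha_1^{(n+1)}$ follows the same template, but the recursion involves $-\lambda_{11}^2 \overline{T\alpha_1^{(n)}}$, which demands two extra observations. First, since $\overline{\lambda_{k0}} = (-1)^{k+1}\lambda_{k1}$ and $\overline{\lambda_{k1}} = (-1)^{k+1}\lambda_{k0}$, each square $(\prod_{k=\ell+1}^{n+1}\lambda_{k0} - \prod_{k=\ell+1}^{n+1}\lambda_{k1})^2$ is real and so unchanged by conjugation. Second, using $\chi_k = 1 - \chi_{k-1}$, one gets $\overline{\lambda_{k\chi_{k-1}}^2} = \lambda_{k(1-\chi_{k-1})}^2 = \lambda_{k\chi_k}^2$. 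Applying $T$ shifts the parity label from $\chi_k$ to $\chi_{k-1}$ (in the new index), and conjugation then flips it back to $\chi_k$. Since $\lambda_{11}^2 = \lambda_{1\chi_1}^2$, multiplication by $-\lambda_{11}^2$ supplies the missing $k=1$ factor in $\prod_{k=1}^{\ell+1}\lambda_{k\chi_k}^2$, and the same absorption-and-reindexing argument as in the $\beta$ case closes the induction.

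The main obstacle is purely combinatorial: tracking three intertwined index changes in the $\alpha$ case, namely the shift from $T$, the parity flip from complex conjugation, and the absorption of the external $\lambda_{11}^2$ factor into the $k=1$ slot of the product. The structural identities $\chi_k + \chi_{k-1} = 1$ and $\overline{\lambda_{k0}} = (-1)^{k+1}\lambda_{k1}$ make these align cleanly, but writing out the bookkeeping unambiguously is the delicate part; everything else is a routine manipulation of finite sums.
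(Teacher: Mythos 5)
Your proposal matches the paper's proof essentially step for step: verify the base case $n=1$ from the known $\mathcal{A}_k^{(1)},\mathcal{B}_k^{(1)}$, then for the inductive step apply $T$, use $\overline{\lambda_{k0}^2}=\lambda_{k1}^2$ and $\chi_{k-1}=1-\chi_k$ to absorb the external $\lambda_{11}^2$ (resp.\ $\lambda_{10}\lambda_{11}$) factor into the product as the missing $k=1$ slot, reindex $\ell\mapsto\ell+1$, and fold the shifted leading term into the $\ell=1$ summand before subtracting from the known $(\prod\lambda_{k0}-\prod\lambda_{k1})^2$ contribution. The bookkeeping you flag as the delicate part (the interaction of $T$, conjugation, and parity labels) is exactly what the paper carries out, and your observations that the squared differences are real and that conjugation restores $\chi_k$ after the $T$-shift are the same key facts.
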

\begin{proof}
Since the top index of the sum is $0$ for $n=1$, we can ignore the sum, which would correspond to $\alpha_1=\beta_1=(\lambda_{10}-\lambda_{11})^2$, as we computed for Proposition \ref{prop:ansatz}. Applying the shift $T$ to each term in the recursive equation gives us the following expansion for the $\lambda_{11}^2\overline{T\alpha_1^{(n)}}$:
\begin{align*}
    \lambda_{11}^2\overline{{T}{\alpha}_1^{(n)}}
    &=\lambda_{11}^2\left(\prod_{k=2}^{n+1}\lambda_{k0}-\prod_{k=2}^{n+1}\lambda_{k1}\right)^2\\&+\lambda_{11}^2\sum_{\ell=1}^{n-1}(-1)^{\ell}\prod_{k=2}^{\ell+1}\lambda_{k\chi_{k}}^{2}\left(\prod_{k=\ell+2}^{n+1}\lambda_{k0}-\prod_{k=\ell+2}^{n+1}\lambda_{k1}\right)^2\\
    &=\sum_{\ell=1}^{n}(-1)^{\ell-1}\prod_{k=1}^{\ell}\lambda_{k\chi_{k}}^{2}\left(\prod_{k=\ell+1}^{n+1}\lambda_{k0}-\prod_{k=\ell+1}^{n+1}\lambda_{k1}\right)^2.
\end{align*}
Applying the same approach to $\lambda_{10}\lambda_{11}{T}{\beta}_1^{(n)}$ gives the same sort of expansion:
\begin{align*}
    \lambda_{10}\lambda_{11}{T}{\beta}_1^{(n)}
    &=\lambda_{10}\lambda_{11}\left(\prod_{k=2}^{n+1}\lambda_{k0}-\prod_{k=2}^{n+1}\lambda_{k1}\right)^2\\&+\lambda_{10}\lambda_{11}\sum_{\ell=1}^{n-1}(-1)^{\ell}\prod_{k=2}^{\ell+1}\lambda_{k0}\lambda_{k1}\left(\prod_{k=\ell+2}^{n+1}\lambda_{k0}-\prod_{k=\ell+2}^{n+1}\lambda_{k1}\right)^2\\
    &=\sum_{\ell=1}^{n}(-1)^{\ell-1}\prod_{k=1}^{\ell}\lambda_{k0}\lambda_{k1}\left(\prod_{k=\ell+1}^{n+1}\lambda_{k0}-\prod_{k=\ell+1}^{n+1}\lambda_{k1}\right)^2.
\end{align*}
Now we can take these results and plug them into the equations for $\alpha_1^{(n+1)}$ and $\beta_1^{(n+1)}$:
\begin{align*}
    \alpha_1^{(n+1)}
    &=\left(\prod_{k=1}^{n+1}\lambda_{k0}-\prod_{k=1}^{n+1}\lambda_{k1}\right)^2-\lambda_{11}^2\overline{{T}{\alpha}_{1}^{(n)}}\\
    &=\left(\prod_{k=1}^{n+1}\lambda_{k0}-\prod_{k=1}^{n+1}\lambda_{k1}\right)^2+\sum_{\ell=1}^{n}(-1)^{\ell}\prod_{k=1}^{\ell}\lambda_{k\chi_{k+1}}^{2}\left(\prod_{k=\ell+1}^{n+1}\lambda_{k0}-\prod_{k=\ell+1}^{n+1}\lambda_{k1}\right)^2,
\end{align*}
\begin{align*}
    \beta_1^{(n+1)}
    &=\left(\prod_{k=1}^{n+1}\lambda_{k0}-\prod_{k=1}^{n+1}\lambda_{k1}\right)^2-\lambda_{10}\lambda_{11}{T}{\beta}_1^{(n)}\\
    &=\left(\prod_{k=1}^{n+1}\lambda_{k0}-\prod_{k=1}^{n+1}\lambda_{k1}\right)^2+\sum_{\ell=1}^{n}(-1)^{\ell}\prod_{k=1}^{\ell}\lambda_{k0}\lambda_{k1}\left(\prod_{k=\ell+1}^{n+1}\lambda_{k0}-\prod_{k=\ell+1}^{n+1}\lambda_{k1}\right)^2.
\end{align*}
And this confirms our formula by induction. 
\end{proof}
\begin{remark}
In abuse of notation, introduce variables $\lambda_{00}=\lambda_{01}=1$, and so we can write the formulas as 
\begin{equation}\label{alphaj1}
    \alpha_1^{(n)}
    =\sum_{\ell=0}^{n-1}(-1)^{\ell}\prod_{k=0}^{\ell}\lambda_{k\chi_{k}}^{2}\left(\prod_{k=\ell+1}^{n}\lambda_{k0}-\prod_{k=\ell+1}^{n}\lambda_{k1}\right)^2,
\end{equation}
\begin{equation}\label{betaj1}
    \beta_1^{(n)}
    =\sum_{\ell=0}^{n-1}(-1)^{\ell}\prod_{k=0}^{\ell}\lambda_{k0}\lambda_{k1}\left(\prod_{k=\ell+1}^{n}\lambda_{k0}-\prod_{k=\ell+1}^{n}\lambda_{k1}\right)^2.
\end{equation}
Applying the shifting operation doesn't really make sense here since $\lambda_{00}$ and $\lambda_{01}$ are not variables the shifting operator acts on. However, if we multiply by $\lambda_{10}\lambda_{11}$ or what respective factor is in front of the shift, all is well since this takes the place of the $\lambda_{00},\lambda_{01}$, which are now gone:
\begin{align*}
    \lambda_{11}^2\overline{{T}{\alpha}_1^{(n)}}
    &=\sum_{\ell=1}^{n}(-1)^{\ell-1}\prod_{k=1}^{\ell}\lambda_{k\chi_{k}}^{2}\left(\prod_{k=\ell+1}^{n+1}\lambda_{k0}-\prod_{k=\ell+1}^{n+1}\lambda_{k1}\right)^2,\\
    \lambda_{10}\lambda_{11}{T}{\beta}_1^{(n)}
    &=\sum_{\ell=1}^{n}(-1)^{\ell-1}\prod_{k=1}^{\ell}\lambda_{k0}\lambda_{k1}\left(\prod_{k=\ell+1}^{n+1}\lambda_{k0}-\prod_{k=\ell+1}^{n+1}\lambda_{k1}\right)^2.
\end{align*}
We will use this notation and shifting rule when deriving the general expression for $\beta_j$ and $\alpha_j$.
\end{remark}

Plugging in for $\lambda_{j0}$ and $\lambda_{j1}$ gives the expression
\begin{gather*}
    \beta_1
    =\left(\prod_{k=1}^{n}\omega^{2(-1)^{k}k}+\prod_{k=1}^{n}\omega^{2(-1)^{k+1}k}-2(-1)^{\frac{n\left(n+3\right)}{2}}\right)\\
    +\sum_{\ell=1}^{n-1}(-1)^{\frac{\ell\left(\ell+1\right)}{2}}\left(\prod_{k=\ell+1}^{n}\omega^{2(-1)^{k}k}+\prod_{k=\ell+1}^{n}\omega^{2(-1)^{k+1}k}-2(-1)^{\frac{n\left(n+3\right)}{2}-\frac{\ell\left(\ell+3\right)}{2}}\right)\\
    =\sum_{\ell=0}^{n-1}(-1)^{\frac{\ell\left(\ell+1\right)}{2}}\left(\prod_{k=\ell+1}^{n}\omega^{2(-1)^{k}k}+\prod_{k=\ell+1}^{n}\omega^{2(-1)^{k+1}k}-2(-1)^{\frac{n\left(n+3\right)}{2}-\frac{\ell\left(\ell+3\right)}{2}}\right).
\end{gather*}
Now we confirm that the expression for $\beta_1$ simplifies to $\beta_1/\beta_0=-1$ from Theorem \ref{cor:values_vi}.
\begin{thm}\label{eq:j1}
For any order $n$, we have $\beta_1/\beta_0=-1$.
\end{thm}
\begin{proof}
Note that $(-1)^{\frac{\ell\left(\ell+1\right)}{2}-\frac{\ell\left(\ell+3\right)}{2}}=(-1)^\ell$. We can consider this identity in two cases. For $n=2m$, note that we can simplify the last term as follows:
\begin{equation*}
    (-1)^{\frac{n\left(n+3\right)}{2}}=(-1)^{\frac{2m\left(2m+3\right)}{2}} = (-1)^{m}.
\end{equation*}
Therefore we can simplify the above expression for $\beta_1$ to
\begin{gather*}
    \implies\beta_1^{(2m)}
    =\sum_{\ell=0}^{2m-1}(-1)^{\frac{\ell\left(\ell+1\right)}{2}}\left(\prod_{k=\ell+1}^{2m}\omega^{2(-1)^{k}k}+\prod_{k=\ell+1}^{2m}\omega^{2(-1)^{k+1}k}\right)-2(-1)^{m+\ell}.
\end{gather*}
Since we are summing over an even number of terms, we have that $\sum_{\ell=0}^{2m-1}2(-1)^{m+\ell}=2(-1)^m\sum_{\ell=0}^{2m-1}(-1)^{\ell}=0$.
Say that $\ell=2l$, then the first two terms simplify to
\[\prod_{k=2l+1}^{2m}\omega^{2(-1)^{k}k}+\prod_{k=2l+1}^{2m}\omega^{2(-1)^{k+1}k}=\omega^{2(m-l)}+\omega^{-2(m-l)}=2\cos((m-l)\theta).\]
The sign in front of these terms is given by $(-1)^{\frac{\ell\left(\ell+1\right)}{2}}=(-1)^{l\left(2l+1\right)}=(-1)^l$.
Similarly for $\ell=2l-1$ we have
\[\prod_{k=2l}^{2m}\omega^{2(-1)^{k}k}+\prod_{k=2l}^{2m}\omega^{2(-1)^{k+1}k}=\omega^{2(m+l)}+\omega^{-2(m+l)}=2\cos((m+l)\theta).\]
The sign in front of these terms is also given by $(-1)^{\frac{\ell\left(\ell+1\right)}{2}}=(-1)^{l\left(2l-1\right)}=(-1)^l$.
Note that we get the term $\cos\theta$ for $\ell = 2m-2$ and $\cos (2m\theta)$ for $\ell = 2m-1$. We have a positive sign in front of $\cos(m'\theta)$ for $m'=m$, and as we increase or decrease $m'$ we alternate signs. Therefore we can write $\beta_1$ as
\[\beta_1^{(2m)}=2(-1)^m\sum_{k=1}^{2m}(-1)^k\cos(k\theta).\]
Since $\beta_0^{(2m)}=(-1)^m$ we have that by Lemma \hyperref[LemmaA1]{A.1}
\[v_1^{(2m)}=\beta_1^{(2m)}/\beta_0^{(2m)}=2\sum_{k=1}^{2m}(-1)^k\cos(k\theta)=-1.\]

For $n=2m-1$, we simplify powers and evaluate the products in the sum. This time we have $(-1)^{\frac{n\left(n+3\right)}{2}}=(-1)^{m+1}$:
\[
    \beta_1^{(2m-1)}
    =\sum_{\ell=0}^{2m-2}(-1)^{\frac{\ell\left(\ell+1\right)}{2}}\left(\prod_{k=\ell+1}^{2m-1}\omega^{2(-1)^{k}k}+\prod_{k=\ell+1}^{2m-1}\omega^{2(-1)^{k+1}k}\right)-2(-1)^{m+\ell+1}.
\]
Similar to before, the alternating sign cancels with itself except for $\ell=0$, which gives $-2(-1)^{m+1}$. Now we consider whether $\ell$ is even or odd, like before. The sign given by $(-1)^{\frac{\ell\left(\ell+1\right)}{2}}$ is the same as before.
For $\ell=2l$, we have
\[\prod_{k=2l+1}^{2m-1}\omega^{2(-1)^{k}k}+\prod_{k=2l+1}^{2m-1}\omega^{2\left(-1\right)^{k+1}k}=\omega^{-2\left(m+l\right)}+\omega^{2\left(m+l\right)}=2\cos(\left(m+l\right)\theta).\]
Similarly for $\ell=2l-1$ we have
\[\prod_{k=2l}^{2m-1}\omega^{2(-1)^{k}k}+\prod_{k=2l}^{2m-1}\omega^{2(-1)^{k+1}k}=\omega^{-2\left(m-l\right)}+\omega^{2\left(m-l\right)}=2\cos(\left(m-l\right)\theta).\]
Once again the remaining terms combine together in the same way. We get a $\cos\theta$ term from $\ell=2m-3$ and a $\cos((2m-1)\theta)$ term from $\ell=2m-2$. Additionally, the $\cos(m'\theta)$ term has a positive sign for $m'=m$, and as we increase or decrease $m'$ the sign alternates. Therefore we rewrite $\beta_1$ as
\begin{align*}
    \beta_1^{(2m-1)}&= -2(-1)^{m+1} + 2(-1)^m\sum_{k=1}^{2m-1}(-1)^k\cos(k\theta)\\
    &=(-1)^{m+1}\left(-2 - 2\sum_{k=1}^{2m-1}(-1)^k\cos(k\theta)\right)\\
    &=2(-1)^{m+1}\sum_{k=0}^{2m-1}(-1)^{k+1}\cos(k\theta).
\end{align*}
Factoring out $\beta_0^{(2m-1)}=(-1)^{m+1}$ yields
\[\beta_1/\beta_0^{(2m-1)}=2\sum_{k=0}^{2m-1}(-1)^{k+1}\cos(k\theta)=-1.\]
We have covered both cases, so this completes the proof.
\end{proof}


\subsection{Applications to n=3}
Here we consider some special cases of the formulas above, namely for order $n=3$, which corresponds to the angle $\theta=\pi/7$. 
Our recursive equations for $j=2$ are the set
\begin{equation}
\begin{aligned}
    \alpha_2^{(n+1)}&=\lambda_{10}^2{T}{\alpha}_{1}^{(n)}+\lambda_{11}^2\overline{{T}{\alpha}_{1}^{(n)}}-2\lambda_{10}\lambda_{11}{T}{\beta}_{1}^{(n)}+\lambda_{10}^2{{T}{\alpha}_{2}^{(n)}},\\
    \beta_2^{(n+1)}&=\lambda_{10}^2{T}{\alpha}_{1}^{(n)}+\lambda_{11}^2\overline{{T}{\alpha}_{1}^{(n)}}-2\lambda_{10}\lambda_{11}{T}{\beta}_{1}^{(n)}+\lambda_{10}\lambda_{11}{T}{\beta}_2^{(n)}.
\end{aligned}    
\end{equation}
By Theorem \ref{eq:j0}, we have $\beta_0=(-1)^{\frac{3(3+3)}{2}}=(-1)^9=-1$.
By Theorems \ref{eq:jn} and \ref{eq:j1} we respectively have that $\beta_1/\beta_0=-1$ and $\beta_3/\beta_0=(-1)^3=-1$, therefore $\beta_1=\beta_3=1$.
In Appendix B we explicitly calculated the formulas for $\beta_0$, $\beta_1$, $\beta_2$, and $\beta_3$ in terms of the $\lambda_{jl}$'s. For $\beta_2$ we have
\begin{equation*}
\begin{aligned}
    \beta_2&=(\lambda_{11}-\lambda_{10})^2(\lambda_{20}\lambda_{30}-\lambda_{21}\lambda_{31})^2-(\lambda_{11}\lambda_{20}-\lambda_{10}\lambda_{21})^2(\lambda_{30}-\lambda_{31})^2\\&+\lambda_{10}\lambda_{11}(\lambda_{20}-\lambda_{21})^2(\lambda_{30}-\lambda_{31})^2.
\end{aligned}
\end{equation*}
Note that $\lambda_{k0}\lambda_{k1}=(-1)^{k+1}$ and $(\lambda_{k0}-\lambda_{k1})^2 = \lambda_{k0}^2+\lambda_{k1}^2-2\lambda_{k0}\lambda_{k1}=\omega^{2k}+\omega^{-2k}-2(-1)^{k+1}=2\cos(k\theta)+2(-1)^k$. The remaining factors can be evaluated as
\begin{align*}
    (\lambda_{20}\lambda_{30}-\lambda_{21}\lambda_{31})^2 = (\omega^2\cdot\omega^{-3}-(-\omega^{-2})\cdot\omega^{3})^2=2\cos\theta+2\\
    (\lambda_{11}\lambda_{20}-\lambda_{10}\lambda_{21})^2=(\omega\cdot\omega^{2}-\omega^{-1}\cdot(-\omega^{-2}))^2=2\cos3\theta+2.
\end{align*}
Therefore $\beta_2$ can be expressed in terms of $\theta$ as
\begin{align*}
    \beta_2&=(2\cos\theta-2)(\lambda_{20}\lambda_{30}-\lambda_{21}\lambda_{31})^2-(\lambda_{11}\lambda_{20}-\lambda_{10}\lambda_{21})^2(2\cos3\theta-2)\\&+(\cos2\theta+2)(\cos3\theta-2),\\
    &=(2\cos\theta-2)(2\cos\theta+2)-(2\cos3\theta-2\cos2\theta)(2\cos3\theta-2).
\end{align*}
Now we evaluate $\beta_2$.
\begin{lemma}
For $\theta=\pi/7$,
\[\beta_2=(2\cos\theta-2)(2\cos\theta+2)-(2\cos3\theta-2\cos2\theta)(2\cos3\theta-2)=-2.\]
\end{lemma}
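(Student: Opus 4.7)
The plan is to chain together two routine calculations. First, I would translate the algebraic expression for $\beta_2^{(3)}$ in terms of the $\lambda_{jl}$ (displayed immediately before the lemma) into the trigonometric middle expression by substituting $\omega = e^{i\theta/2}$ and the roots of unity from \eqref{eq:lambda}. Writing out $\lambda_{10} = \omega^{-1}$, $\lambda_{11} = \omega$, $\lambda_{20} = \omega^2$, $\lambda_{21} = -\omega^{-2}$, $\lambda_{30} = \omega^{-3}$, $\lambda_{31} = \omega^3$ and using $\omega^m + \omega^{-m} = 2\cos(m\theta/2)$, I obtain the factors $(\lambda_{11}-\lambda_{10})^2 = 2\cos\theta - 2$, $(\lambda_{20}\lambda_{30}-\lambda_{21}\lambda_{31})^2 = 2\cos\theta + 2$, $(\lambda_{11}\lambda_{20}-\lambda_{10}\lambda_{21})^2 = 2\cos 3\theta + 2$, $(\lambda_{20}-\lambda_{21})^2 = 2\cos 2\theta + 2$, $(\lambda_{30}-\lambda_{31})^2 = 2\cos 3\theta - 2$, together with $\lambda_{10}\lambda_{11} = 1$. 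Since the last two terms of the $\beta_2^{(3)}$ formula share the factor $2\cos 3\theta - 2$, they combine cleanly to yield the middle expression $(2\cos\theta-2)(2\cos\theta+2) - (2\cos 3\theta - 2\cos 2\theta)(2\cos 3\theta - 2)$.

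Second, I would evaluate this expression by expanding, applying $2\cos^2 x = 1 + \cos 2x$ to the squares and $2\cos 2\theta\cos 3\theta = \cos\theta + \cos 5\theta$ to the cross term. After simplification this reduces to $-4 + 2\cos\theta - 2\cos 2\theta + 4\cos 3\theta + 2\cos 5\theta - 2\cos 6\theta$. Specializing to $\theta = \pi/7$, the supplementary-angle relations $\cos 5\theta = -\cos 2\theta$ and $\cos 6\theta = -\cos\theta$ collapse this to $-4 + 4(\cos\theta - \cos 2\theta + \cos 3\theta)$, so the claim reduces to verifying $\cos\theta - \cos 2\theta + \cos 3\theta = 1/2$ at $\theta = \pi/7$.

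This last identity is the $n = 3$ instance of the alternating cosine identity already used in the paper to evaluate $\beta_1^{(n)}$ (a special case of Lemma~A.1 in the appendix); equivalently, it follows at once from taking real parts of $\sum_{k=0}^{6}(-1)^k e^{ik\theta} = 0$, which holds because $e^{i7\theta} = -1$ makes the geometric series telescope. Substituting back gives $\beta_2^{(3)} = -4 + 4\cdot\tfrac{1}{2} = -2$, as desired. There is no real obstacle here: the argument is mechanical once the sign in $\lambda_{21} = -\omega^{-2}$ is tracked carefully, which is precisely what converts $(\lambda_{11}\lambda_{20} - \lambda_{10}\lambda_{21})^2$ into $2\cos 3\theta + 2$ rather than $2\cos 3\theta - 2$ and thereby makes the two $(2\cos 3\theta - 2)$-factored terms combine into the lemma's middle expression.
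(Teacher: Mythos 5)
Your proof is correct, and it is essentially the same line of attack as the paper's: expand trigonometrically, use product-to-sum and supplementary-angle relations at $\theta=\pi/7$, and finish with the alternating cosine sum identity (Lemma~A.1 with $n=3$). Your factor conversions from the $\lambda_{jl}$ all check out, including the sign in $\lambda_{21}=-\omega^{-2}$ and the resulting $(\lambda_{11}\lambda_{20}-\lambda_{10}\lambda_{21})^2=2\cos 3\theta+2$, which is the one place where a slip would have prevented the last two terms from sharing the factor $2\cos 3\theta-2$.

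The only difference from the paper is organizational: the paper invokes Lemma~A.1 up front to rewrite $2\cos 3\theta - 2\cos 2\theta = 1 - 2\cos\theta$, so the only product-to-sum step needed is $2\cos\theta\cos 3\theta=\cos 2\theta+\cos 4\theta$ followed by the single folding $\cos 4\theta=-\cos 3\theta$. You instead expand everything mechanically, pick up $\cos 5\theta$ and $\cos 6\theta$, and fold with $\cos 5\theta=-\cos 2\theta$, $\cos 6\theta=-\cos\theta$. Both routes funnel into the same identity $\cos\theta - \cos 2\theta + \cos 3\theta = \tfrac12$. Your approach is slightly longer in the algebra but uses Lemma~A.1 only once, and your remark that the identity can be read off from $\sum_{k=0}^6(-1)^k e^{ik\theta}=0$ is a clean alternative to the proof of Lemma~A.1 for this particular $n$. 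You also explicitly carry out the translation from the $\lambda_{jl}$ expression to the trigonometric middle expression, which the paper states without detail; this is a welcome addition.
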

\begin{proof}
To start, $2\cos3\theta-2\cos2\theta=1-2\cos\theta$ by Lemma \hyperref[LemmaA1]{A.1}, and $(2\cos\theta-2)(2\cos\theta+2)=2\cos2\theta - 2$ by double angle identities.
\begin{align*}
    \beta_2&=(2\cos2\theta - 2) - (1-2\cos\theta)(2\cos3\theta-2)\\
    &=-4\cos\theta + 2\cos2\theta - 2\cos3\theta + 4\cos\theta\cos3\theta\\
    &=-4\cos\theta + 2\cos2\theta - 2\cos3\theta + 2\cos2\theta + 2\cos4\theta\\
    &=-4\cos\theta + 4\cos2\theta - 4\cos3\theta = -2.
\end{align*}
\end{proof}
From these results we get that $(\beta_0,\beta_1,\beta_2,\beta_3)=(-1,1,-2,1)$. This matches Theorem \ref{cor:values_vi} as $\beta_1=-\beta_0$, $\beta_2=\beta_0\binom{3-1}{3-2}=-2$, and $\beta_3=-\beta_0\binom{3-1-1}{3-2-1}=1$.
To confirm the property of Conjecture \ref{conj:convergence} directly, substituting each value for $\beta_j$ gives
\begin{align*}
    \mathcal{B}_k&=\beta_0+|a_k|^2(\beta_1-|b_k|^2(\beta_2+|a_k|^2\beta_3))\\
    &=\beta_0+|a_k|^2(\beta_1+\beta_2(-|b_k|^2)+\beta_3(-|b_k|^2+|b_k|^4)).
\end{align*}
Observe that from Section 4 we showed that the required matrix system to guarantee $\mathcal{B}_k=\beta_0|b_k|^{6}$ is given by $M_3\mathbf{v}=-\mathbf{e}$:
\[
M_3\mathbf{v}=
\begin{bmatrix}
1&0&0\\
0&-1&-1\\
0&0&1
\end{bmatrix}
\begin{bmatrix}
\beta_1/\beta_0\\\beta_2/\beta_0\\\beta_3/\beta_0
\end{bmatrix}
=
\begin{bmatrix}
-1\\-1\\-1
\end{bmatrix}\implies
\begin{bmatrix}
\beta_1/\beta_0\\\beta_2/\beta_0\\\beta_3/\beta_0
\end{bmatrix}=
\begin{bmatrix}
-1\\2\\-1
\end{bmatrix}.\]
The coefficients $\beta_j$ equal the solution to this system, therefore we should expect that $\mathcal{B}_k$ has the desired property from Conjecture \ref{conj:convergence}. And indeed it does:
\begin{align*}
    \mathcal{B}_k&=-1+|a_k|^2(1-2(-|b_k|^2)+(-|b_k|^2+|b_k|^4))\\
    &=-1+(1-|b_k|^2)(1+|b_k|^2+|b_k|^4)=-1+(1-|b_k|^6)=-|b_k|^6.
\end{align*}
Therefore we get $|(U_{k+1})_{21}|=|b_k\mathcal{B}_k|=|b_k|\cdot |b_k|^{6}=|b_k|^7$, which is our desired relation. 
By multiplying by the constant $(\alpha_0)^{-1}$ and writing $D(\theta)=\diag(1,e^{i\theta})$ we can also construct the sequence
\begin{equation}
    U_{k+1}=U_k D(\theta) U_k^{-1}D(\theta)^{5}U_k D(\theta)^{3}U_k^{-1} D(\theta)^{3}U_k D(\theta)^{5}U_k^{-1} D(\theta)U_k.
\end{equation}
By Proposition \ref{prop:phase}, this sequence has the same property because we distribute a $\lambda_{j0}^{-1}$ to each $D_j$, giving $D_j\lambda_{j0}^{-1}=\diag(1,(-1)^{j+1}e^{i\theta_p(-1)^{j+1}j})$. So $D_1\lambda_{10}^{-1}=D(\theta)$, $D_2\lambda_{20}^{-1}=D(\theta)^{5}$, and $D_3\lambda_{30}^{-1}=D(\theta)^{3}$ by using the identity $e^{j\pi i}=(-1)^j$. Since these sequences differ by an additional phase multiplied at the end, they both have the conjectured property by Proposition \ref{prop:phase}.

\subsection{Arbitrary Order}
Now we derive the complete solution to the recursive equations \ref{eq:alpha} and \ref{eq:beta}
We give the following result, which describes the complete solution to these equations.
\begin{thm}\label{thm:general_alpha_beta}
The general solution for $\alpha_j^{(n)}$ and $\beta_j^{(n)}$ for $j\leq n$ is given by the following. Denote $L_{j_1}^{j_2}=\sum_{j=j_1}^{j_2}\ell_j$, where $\ell_j$ are indices of summation. Then for $j\geq 1$,
\begin{equation}\label{alphaj_arb}
\begin{aligned}
    \alpha_j^{(n)}&=\sum_{\ell_j=0}^{n-j}\left[\prod_{k=0}^{\ell_j}\lambda_{k,\chi_j\chi_{k}}^2\right]\sum_{\ell_{j-1}=0}^{n-j-L^{j}_{j}}\sum_{\ell_{j-2}=0}^{n-j-L^{j}_{j-1}}\ldots\sum_{\ell_{1}=0}^{n-j-L^{j}_{2}}\left(\prod_{k=L_{1}^{j}+j}^{n}\lambda_{k0}-\prod_{k=L_{1}^{j}+j}^{n}\lambda_{k1}\right)^2\\
    &\times\left[\prod_{\mu=1}^{j}(-1)^{\chi_{\mu}\ell_{\mu}}\right]\times\left[\prod_{\mu=1}^{j-1}\left(\prod_{k=L^{j}_{1+\mu}+j-\mu}^{L^{j}_{\mu}+j-\mu}\lambda_{k,(\chi_\mu\chi_k)} - \prod_{k=L^{j}_{1+\mu}+j-\mu}^{L^{j}_{\mu}+j-\mu}\lambda_{k,(1-\chi_\mu\chi_k)}\right)^2\right],
\end{aligned}
\end{equation}
\begin{equation}\label{betaj_arb}
\begin{aligned}
    \beta_j^{(n)}&=\sum_{\ell_j=0}^{n-j}\left[\prod_{k=0}^{\ell_j}\lambda_{k0}\lambda_{k1}\right]\sum_{\ell_{j-1}=0}^{n-j-L^{j}_{j}}\sum_{\ell_{j-2}=0}^{n-j-L^{j}_{j-1}}\ldots\sum_{\ell_{1}=0}^{n-j-L^{j}_{2}}\left(\prod_{k=L_{1}^{j}+j}^{n}\lambda_{k0}-\prod_{k=L_{1}^{j}+j}^{n}\lambda_{k1}\right)^2\\
    &\times\left[\prod_{\mu=1}^{j}(-1)^{\chi_{\mu}\ell_{\mu}}\right]\times\left[\prod_{\mu=1}^{j-1}\left(\prod_{k=L^{j}_{1+\mu}+j-\mu}^{L^{j}_{\mu}+j-\mu}\lambda_{k,(\chi_\mu\chi_k)} - \prod_{k=L^{j}_{1+\mu}+j-\mu}^{L^{j}_{\mu}+j-\mu}\lambda_{k,(1-\chi_\mu\chi_k)}\right)^2\right].
\end{aligned}
\end{equation}
\end{thm}
\begin{proof}
The proof is by double induction, using the base cases $(j,n)=(1,n)$ and $(n,n)$, and we will use the recursive equations to prove the formulas are true for $(j,n+1)$, assuming $(j-1,n)$ and $(j,n)$. Plugging in $j=1$ leaves only the first sum, with $\ell_2=0$. The second product can be ignored, and $L_1^j+j=\ell_1+1$. This reduces exactly to the correct formulas for $j=1$ given in equations \ref{alphaj1} and \ref{betaj1}, similarly so for $j=n$.
\begin{enumerate}
    \item 
If $j=2i$, then the equation we are considering is
\begin{align*}
    \alpha_{j}^{(n+1)}&=\lambda_{10}^2{T}{\alpha}_{j-1}^{(n)}+\lambda_{11}^2\overline{{T}{\alpha}_{j-1}^{(n)}}-2\lambda_{10}\lambda_{11}{T}{\beta}_{j-1}^{(n)}+\lambda_{10}^2{{T}{\alpha}_{j}}^{(n)},\\
    \beta_{j}^{(n+1)}&=\lambda_{10}^2{T}{\alpha}_{j-1}^{(n)}+\lambda_{11}^2\overline{{T}{\alpha}_{j-1}^{(n)}}-2\lambda_{10}\lambda_{11}{T}{\beta}_{j-1}^{(n)}+\lambda_{10}\lambda_{11}{T}{\beta}_{j}^{(n)}.
\end{align*}
It suffices to just check $\alpha_j$ since the difference between the two formulas is just the factors in front.
The first three terms can be written as follows since $j-1$ is odd.
\begin{gather*}
    \lambda_{10}^2{T}{\alpha}_{j-1}^{(n)}+\lambda_{11}^2\overline{{T}{\alpha}_{j-1}^{(n)}}-2\lambda_{10}\lambda_{11}{T}{\beta}_{j-1}^{(n)}
\end{gather*}
\begin{gather*}
    =\sum_{\ell_{j-1}=0}^{n-j+1}\left(\prod_{k=1}^{\ell_{j-1}+1}\lambda_{k\chi_{k+1}}-\prod_{k=1}^{\ell_{j-1}+1}\lambda_{k\chi_{k}}\right)^2\sum_{\ell_{j-2}=0}^{n-j+1-L^{j-1}_{j-1}}\ldots\sum_{\ell_{1}=0}^{n-j+1-L^{j-1}_{2}}\\
    \times\left(\prod_{k=L_{1}^{j-1}+j}^{n+1}\lambda_{k0}-\prod_{k=L_{1}^{j-1}+j}^{n+1}\lambda_{k1}\right)^2\times\left[\prod_{\mu=1}^{j-1}(-1)^{\chi_{\mu}\ell_{\mu}}\right]\\
    \times\left[\prod_{\mu=1}^{j-2}\left(\prod_{k=L^{j-1}_{1+\mu}+j-\mu}^{L^{j-1}_{\mu}+j-\mu}\lambda_{k,(\chi_\mu\chi_k)} - \prod_{k=L^{j-1}_{1+\mu}+j-\mu}^{L^{j-1}_{\mu}+j-\mu}\lambda_{k,(1-\chi_\mu\chi_k)}\right)^2\right].
\end{gather*}
Each of the product indices are raised by 1 due to $T$, and the terms in the last factors of the expression are swapped from this operation, although the expression is the same.
Now we can place the factor in the front into the product at the end, where it takes the place of $\mu=j-1$. We also raise the powers of negative signs by 1 since $j$ is even, and so we are just multiplying by 1. Let $\ell_j=0$ so that we can write the index sum $L_*^{j-1}$ as $L_*^{j}$:
\begin{gather*}
    =\sum_{\ell_{j-1}=0}^{n-j+1}\sum_{\ell_{j-2}=0}^{n-j+1-L^{j}_{j-1}}\ldots\sum_{\ell_{1}=0}^{n-j+1-L^{j}_{2}}\left(\prod_{k=L_{1}^{j}+j}^{n+1}\lambda_{k0}-\prod_{k=L_{1}^{j}+j}^{n+1}\lambda_{k1}\right)^2\\
    \times\left[\prod_{\mu=1}^{j}(-1)^{\chi_{\mu}\ell_{\mu}}\right]\times\left[\prod_{\mu=1}^{j-1}\left(\prod_{k=L^{j}_{1+\mu}+j-\mu}^{L^{j}_{\mu}+j-\mu}\lambda_{k,(\chi_\mu\chi_k)} - \prod_{k=L^{j}_{1+\mu}+j-\mu}^{L^{j}_{\mu}+j-\mu}\lambda_{k,(1-\chi_\mu\chi_k)}\right)^2\right].
\end{gather*}
The last term is $\lambda_{10}^2T\alpha_j^{(n)}$, which can be written as
\begin{equation*}
\begin{aligned}
    &=\sum_{\ell_j=0}^{n-j}\left[\prod_{k=0}^{\ell_j+1}\lambda_{k0}^2\right]\sum_{\ell_{j-1}=0}^{n-j-L^{j}_{j}}\sum_{\ell_{j-2}=0}^{n-j-L^{j}_{j-1}}\ldots\sum_{\ell_{1}=0}^{n-j-L^{j}_{2}}\left(\prod_{k=L_{1}^{j}+j+1}^{n+1}\lambda_{k0}-\prod_{k=L_{1}^{j}+j+1}^{n+1}\lambda_{k1}\right)^2\\
    &\times\left[\prod_{\mu=1}^{j}(-1)^{\chi_{\mu}\ell_{\mu}}\right]\times\left[\prod_{\mu=1}^{j-1}\left(\prod_{k=L^{j}_{1+\mu}+j+1-\mu}^{L^{j}_{\mu}+j+1-\mu}\lambda_{k,(\chi_\mu\chi_k)} - \prod_{k=L^{j}_{1+\mu}+j+1-\mu}^{L^{j}_{\mu}+j+1-\mu}\lambda_{k,(1-\chi_\mu\chi_k)}\right)^2\right]\\
    &=\sum_{\ell_j=1}^{n-j+1}\left[\prod_{k=0}^{\ell_j}\lambda_{k0}^2\right]\sum_{\ell_{j-1}=0}^{n-j+1-L_{j}^{j}}\sum_{\ell_{j-2}=0}^{n-j+1-L^{j}_{j-1}}\ldots\sum_{\ell_{1}=0}^{n-j+1-L^{j}_{2}}\left(\prod_{k=L_{1}^{j}+j}^{n+1}\lambda_{k0}-\prod_{k=L_{1}^{j}+j}^{n+1}\lambda_{k1}\right)^2\\
    &\times\left[\prod_{\mu=1}^{j}(-1)^{\chi_{\mu}\ell_{\mu}}\right]\times\left[\prod_{\mu=1}^{j-1}\left(\prod_{k=L^{j}_{1+\mu}+j-\mu}^{L^{j}_{\mu}+j-\mu}\lambda_{k,(\chi_\mu\chi_k)} - \prod_{k=L^{j}_{1+\mu}+j-\mu}^{L^{j}_{\mu}+j-\mu}\lambda_{k,(1-\chi_\mu\chi_k)}\right)^2\right].
\end{aligned}
\end{equation*}
Adding the terms together yields our desired result by letting $\ell_j=0$ in the formula for $\lambda_{10}^2{T}{\alpha}_{j-1}^{(n)}+\lambda_{11}^2\overline{{T}{\alpha}_{j-1}^{(n)}}-2\lambda_{10}\lambda_{11}{T}{\beta}_{j-1}^{(n)}$. These terms take the $\ell_j$ place in the total sum.
\begin{equation*}
    \begin{aligned}
    \alpha_j^{(n+1)}&=\sum_{\ell_j=0}^{n-j+1}\left[\prod_{k=0}^{\ell_j}\lambda_{k0}^2\right]\sum_{\ell_{j-1}=0}^{n-j+1-L_{j}^{j}}\sum_{\ell_{j-2}=0}^{n-j+1-L^{j}_{j-1}}\ldots\sum_{\ell_{1}=0}^{n-j+1-L^{j}_{2}}\\
    &\times\left[\prod_{\mu=1}^{j}(-1)^{\chi_{\mu}\ell_{\mu}}\right]\times\left(\prod_{k=L_{1}^{j}+j}^{n+1}\lambda_{k0}-\prod_{k=L_{1}^{j}+j}^{n+1}\lambda_{k1}\right)^2\\
    &\times\left[\prod_{\mu=1}^{j-1}\left(\prod_{k=L^{j}_{1+\mu}+j-\mu}^{L^{j}_{\mu}+j-\mu}\lambda_{k,(\chi_\mu\chi_k)} - \prod_{k=L^{j}_{1+\mu}+j-\mu}^{L^{j}_{\mu}+j-\mu}\lambda_{k,(1-\chi_\mu\chi_k)}\right)^2\right].
    \end{aligned}
\end{equation*}

\item
If $j=2i+1$, then the equation we are considering is
\begin{align*}
    \alpha_j^{(n+1)}&=\lambda_{10}^2{T}{\alpha}_{j-1}^{(n)}+\lambda_{11}^2\overline{{T}{\alpha}_{j-1}^{(n)}}-2\lambda_{10}\lambda_{11}{T}{\beta}_{j-1}^{(n)}-\lambda_{11}^2\overline{{T}{\alpha}_{j}^{(n)}},\\
    \beta_j^{(n+1)}&=\lambda_{10}^2{T}{\alpha}_{j-1}^{(n)}+\lambda_{11}^2\overline{{T}{\alpha}_{j-1}^{(n)}}-2\lambda_{10}\lambda_{11}{T}{\beta}_{j-1}^{(n)}-\lambda_{10}\lambda_{11}{T}{\beta}_j^{(n)}.
\end{align*}
It suffices to check $\alpha_j$, for the same reasons as above. In this case $j-1$ is even, so we get
\begin{gather*}
    \lambda_{10}^2{T}{\alpha}_{j-1}^{(n)}+\lambda_{11}^2\overline{{T}{\alpha}_{j-1}^{(n)}}-2\lambda_{10}\lambda_{11}{T}{\beta}_{j-1}^{(n)}
\end{gather*}
\begin{gather*}
    =\sum_{\ell_{j-1}=0}^{n-j+1}\left(\prod_{k=1}^{\ell_{j-1}+1}\lambda_{k0}-\prod_{k=1}^{\ell_{j-1}+1}\lambda_{k1}\right)^2\sum_{\ell_{j-2}=0}^{n-j+1-L^{j-1}_{j-1}}\ldots\sum_{\ell_{1}=0}^{n-j+1-L^{j-1}_{2}}\\
    \times\left(\prod_{k=L_{1}^{j-1}+j}^{n+1}\lambda_{k0}-\prod_{k=L_{1}^{j-1}+j}^{n+1}\lambda_{k1}\right)^2\times\left[\prod_{\mu=1}^{j-1}(-1)^{\chi_{\mu}\ell_{\mu}}\right]\\
    \times\left[\prod_{\mu=1}^{j-2}\left(\prod_{k=L^{j-1}_{1+\mu}+j-\mu}^{L^{j-1}_{\mu}+j-\mu}\lambda_{k,(\chi_\mu\chi_k)} - \prod_{k=L^{j-1}_{1+\mu}+j-\mu}^{L^{j-1}_{\mu}+j-\mu}\lambda_{k,(1-\chi_\mu\chi_k)}\right)^2\right]
\end{gather*}
\begin{gather*}
    =\sum_{\ell_{j-1}=0}^{n-j+1}\sum_{\ell_{j-2}=0}^{n-j+1-L^{j}_{j-1}}\ldots\sum_{\ell_{1}=0}^{n-j+1-L^{j}_{2}}\left(\prod_{k=L_{1}^{j}+j}^{n+1}\lambda_{k0}-\prod_{k=L_{1}^{j}+j}^{n+1}\lambda_{k1}\right)^2\\
    \times\left[\prod_{\mu=1}^{j}(-1)^{\chi_{\mu}\ell_{\mu}}\right]\times\left[\prod_{\mu=1}^{j-1}\left(\prod_{k=L^{j}_{1+\mu}+j-\mu}^{L^{j}_{\mu}+j-\mu}\lambda_{k,(\chi_\mu\chi_k)} - \prod_{k=L^{j}_{1+\mu}+j-\mu}^{L^{j}_{\mu}+j-\mu}\lambda_{k,(1-\chi_\mu\chi_k)}\right)^2\right].
\end{gather*}
As before, we have let $\ell_j=0$ above to make the expression easier to combine with the last term, which can be written as
\begin{gather*}
    -\lambda_{11}^2\overline{{T}{\alpha}_{j}^{(n)}}=-\sum_{\ell_j=0}^{n-j}\left[\prod_{k=0}^{\ell_j+1}\lambda_{k\chi_{k}}^2\right]\sum_{\ell_{j-1}=0}^{n-j-L^{j}_{j}}\sum_{\ell_{j-2}=0}^{n-j-L^{j}_{j-1}}\ldots\sum_{\ell_{1}=0}^{n-j-L^{j}_{2}}\\
    \times\left(\prod_{k=L_{1}^{j}+j+1}^{n+1}\lambda_{k0}-\prod_{k=L_{1}^{j}+j+1}^{n+1}\lambda_{k1}\right)^2\times(-1)^{\ell_{j}}\left[\prod_{\mu=1}^{j-1}(-1)^{\chi_{\mu}\ell_{\mu}}\right]\\
    \times\left[\prod_{\mu=1}^{j-1}\left(\prod_{k=L^{j}_{1+\mu}+j+1-\mu}^{L^{j}_{\mu}+j+1-\mu}\lambda_{k,(\chi_\mu\chi_k)} - \prod_{k=L^{j}_{1+\mu}+j+1-\mu}^{L^{j}_{\mu}+j+1-\mu}\lambda_{k,(1-\chi_\mu\chi_k)}\right)^2\right]
\end{gather*}
\begin{gather*}
    =\sum_{\ell_j=1}^{n-j+1}\left[\prod_{k=0}^{\ell_j}\lambda_{k\chi_k}^2\right]\sum_{\ell_{j-1}=0}^{n-j+1-L_{j}^{j}}\sum_{\ell_{j-2}=0}^{n-j+1-L^{j}_{j-1}}\ldots\sum_{\ell_{1}=0}^{n-j+1-L^{j}_{2}}\\
    \times \left(\prod_{k=L_{1}^{j}+j}^{n+1}\lambda_{k0}-\prod_{k=L_{1}^{j}+j}^{n+1}\lambda_{k1}\right)^2\times(-1)^{\ell_{j}}\left[\prod_{\mu=1}^{j-1}(-1)^{\chi_{\mu}\ell_{\mu}}\right]\\
    \times\left[\prod_{\mu=1}^{j-1}\left(\prod_{k=L^{j}_{1+\mu}+j-\mu}^{L^{j}_{\mu}+j-\mu}\lambda_{k,(\chi_\mu\chi_k)} - \prod_{k=L^{j}_{1+\mu}+j-\mu}^{L^{j}_{\mu}+j-\mu}\lambda_{k,(1-\chi_\mu\chi_k)}\right)^2\right].
\end{gather*}
Adding together yields the formula
\begin{equation*}
\begin{aligned}
    \alpha_j^{(n+1)}&=\sum_{\ell_j=0}^{n-j+1}\left[\prod_{k=0}^{\ell_j}\lambda_{k\chi_k}^2\right]\sum_{\ell_{j-1}=0}^{n-j+1-L_{j}^{j}}\sum_{\ell_{j-2}=0}^{n-j+1-L^{j}_{j-1}}\ldots\sum_{\ell_{1}=0}^{n-j+1-L^{j}_{2}}\\
    &\times \left(\prod_{k=L_{1}^{j}+j}^{n+1}\lambda_{k0}-\prod_{k=L_{1}^{j}+j}^{n+1}\lambda_{k1}\right)^2\times\left[\prod_{\mu=1}^{j}(-1)^{\chi_{\mu}\ell_{\mu}}\right]\\
    &\times\left[\prod_{\mu=1}^{j-1}\left(\prod_{k=L^{j}_{1+\mu}+j-\mu}^{L^{j}_{\mu}+j-\mu}\lambda_{k,(\chi_\mu\chi_k)} - \prod_{k=L^{j}_{1+\mu}+j-\mu}^{L^{j}_{\mu}+j-\mu}\lambda_{k,(1-\chi_\mu\chi_k)}\right)^2\right].
\end{aligned}
\end{equation*}
\end{enumerate}
The formula holds for $j$ even or odd, and so this concludes the proof.
\end{proof}

All that remains is to prove the identities we desire, which seem likely to be true but difficult to show. We make the following conjecture:
\begin{conjecture}\label{conj:values}
For any order $n$, the above formula in equation \ref{betaj_arb} for $\beta_j$ yields the relations $\beta_1=-(-1)^{\frac{n(n+3)}{2}}$,  $\beta_{2i}=(-1)^{\frac{n(n+3)}{2}}\binom{n-i}{i}$, and $\beta_{2i+1}=-(-1)^{\frac{n(n+3)}{2}}\binom{n-i-1}{i}$ for $i\in\mathbb{N}$.
\end{conjecture}
It remains an open question as to whether this conjecture is true, hopefully to be proven in a later paper. However, if this conjecture is proven, it would directly imply Conjecture \ref{conj:convergence} since Theorem \ref{cor:values_vi} shows that the two are equivalent statements.

\subsection{Additional Considerations}
In order for these sequences to actually converge to a single diagonal matrix, we must apply one modification, due to the choice of $\lambda_{jl}$ we made. Recall that the leading term in $\mathcal{A}_k$ is $\alpha_0=\prod\lambda_{k0}^2$. As $b_k\to 0$, the higher order terms vanish, and $\alpha_0$ will only remain as the factor on $a_k$, but this factor means that $a_k$ will rotate on the unit circle endlessly, and if we don't want this, we need to apply an additional matrix to the end of our sequence. Say that our diagonalizing sequence is $\{U_{k+1}\}$ for some order $n$, then multiply by a matrix $F=\diag(\alpha_0^{-1},\alpha_0)=\diag(\prod\lambda_{k1}^2,\prod\lambda_{k0}^2)$. So our sequence is now $\tilde U_{k+1}=U_{k+1}F$, and so our sequences are now multiplied by $\alpha_0^{-1}$. Since this is just a factor with magnitude 1, this still maintains the property $|b_{k+1}|=|b_k|^N$. The benefit to this factor is that now $\mathcal{A}_k$ takes the form $\mathcal{A}_k=1-|b_k|^2(\ldots)\to 1$ as $k\to \infty$. What this means is that the sequence $\{U_{k}^{(n)}\}_{k=0}^{\infty}$ will now converge to the identity gate instead of $a_k$ rotating on the complex unit circle as $k\to\infty$.

\section{Composing Convergent Sequences for Composite Angles}\label{section6}
Recall that we are looking for sequences of the form $U_{k+1}=A_N(U_k;\theta)$, where $A_N$ is some multiplicative function of $U_k$, $U_k^{-1}$, and some diagonal matrices. Here $N\geq3$ is an odd number, $\theta$ is a given angle, and we want to have this sequence to satisfy the conjectured property $|b_{k+1}|=|b_k|^{N}$. If $N$ is composite, then there is a na{\"i}ve way to find a convergent sequence, which is by taking sequence composition. By sequence composition we mean taking two convergent sequences defined by the relations $U_{k+1}=A_{p_1}(U_k;\theta_{p_1})$ and $U_{k+1}=A_{p_2}(U_k;\theta_{p_2})$ respectively and defining a new sequence using the composition $U_{k+1} = A_{p_2}(A_{p_1}(U_k;\theta_{p_1});\theta_{p_2})$. 
This is a very different (and simple) approach to the above, but it is an effective strategy if the only desired outcome is this diagonalization.
\begin{thm}
Provided two sequences $A_{N_1}(U_k;\frac{\pi}{N_1})$ and $A_{N_2}(U_k;\frac{\pi}{N_2})$ satisfying the property from Conjecture \ref{conj:convergence}, we can take the composition of the sequences such that $U_{k+1}=A_{N_1N_2}(U_k;\frac{\pi}{N_1N_2})$ has the property that $|b_{k+1}|=|b_k|^{N_1N_2}$.
\end{thm}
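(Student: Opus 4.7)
The plan is to read the theorem essentially as a statement about how the exponent property behaves under functional composition of sequence maps. Since the earlier remark in Section 2 declares that $A_p$ in this section refers to \emph{any} recursion with the desired $p$-power exponent property on the $(2,1)$-entry (not just the Reichardt construction), the content of the theorem reduces to verifying that the composition of two such maps again has the analogous exponent property with the product of exponents. No heavy computation is required; the only work is in tracking the magnitude of the $(2,1)$-entry through two applications.

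First, I would unpack the composition explicitly. Define $V_k := A_{p_1}(U_k;\theta_{p_1})$, so $V_k$ is one step of the $N_1$-sequence starting from $U_k$. By the hypothesis that $A_{N_1}(\,\cdot\,;\pi/N_1)$ satisfies the conjectured exponent property, we have
\[
|(V_k)_{21}| = |(U_k)_{21}|^{N_1}.
\]
Next, set $U_{k+1} := A_{p_2}(V_k;\theta_{p_2})$, i.e.\ apply the $N_2$-sequence with $V_k$ as its input unitary. By the analogous hypothesis on $A_{N_2}$, applied to the input $V_k$,
\[
|(U_{k+1})_{21}| = |(V_k)_{21}|^{N_2}.
\]
Chaining the two identities yields $|(U_{k+1})_{21}| = |(U_k)_{21}|^{N_1 N_2}$, which is exactly the property claimed. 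Observe that this argument uses nothing about the specific internal structure of either recursion (no diagonal matrices $D_j$, no $\lambda_{jl}$), only the defining power law on the lower-left magnitude.

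The only point that requires a clarifying sentence, rather than a calculation, is the slight notational abuse in writing the composed recursion as $A_{N_1 N_2}(U_k;\pi/(N_1 N_2))$: the composed map does not literally use the angle $\pi/(N_1 N_2)$, but instead inherits the two angles $\pi/N_1$ and $\pi/N_2$ from its constituents. Under the Section~2 convention that $A_p(\,\cdot\,;\theta)$ denotes \emph{any} recursion whose lower-left entry satisfies the $p$-th power relation, the composed recursion legitimately qualifies as an $A_{N_1 N_2}$ sequence, and I would state this explicitly as a one-line remark before concluding the proof. There is no genuine obstacle here; the interesting content is that this gives an alternative construction for composite $N$ which, as the introduction emphasizes, differs from the conjectured single-angle Reichardt sequence (notably for $N = 15$), and that observation can be mentioned as a concluding corollary rather than as part of the proof itself.
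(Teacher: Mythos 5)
Your core argument --- chaining the two exponent hypotheses through the intermediate matrix $V_k=A_{N_1}(U_k;\cdot)$ --- is precisely the paper's argument; the proof really is this short. Two refinements are worth flagging, though.

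First, to invoke the $A_{N_2}$ hypothesis on $V_k$ you need to know that $V_k$ is still unitary (the exponent property is only asserted for unitary input). The paper says this explicitly; you leave it implicit. It is harmless here, but a one-line remark that $A_{N_1}$ maps $\operatorname{U}(2)$ to $\operatorname{U}(2)$ (being a word in $U_k$, $U_k^{-1}$, and unitary diagonals) would close the gap cleanly.

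Second, and more substantively, you wave away the angle question as ``notational abuse,'' but the paper treats it as genuine content. The point of insisting on the single angle $\theta = \pi/(N_1N_2)$ is physical: in the topological-model applications one only has access to powers of a fixed diagonal $D(\theta)$. The paper's observation is that $\pi/N_1 = \theta\cdot(N/N_1)$ and $\pi/N_2 = \theta\cdot(N/N_2)$ with \emph{integer} multipliers, so $D(\pi/N_i) = D(\theta)^{N/N_i}$, and hence the composed word is literally expressible in powers of $D(\theta)$ alone. That is why the paper writes
\[
U_{k+1}=A_{N_2}\!\left(A_{N_1}\!\left(U_k;\theta\tfrac{N}{N_1}\right);\theta\tfrac{N}{N_2}\right),\qquad \theta=\tfrac{\pi}{N},
\]
and why the subsequent example spells out the resulting list of integer powers $(5,5,3,-5,\dots)$. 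Reading $A_{N_1N_2}(U_k;\pi/(N_1N_2))$ as merely ``some map with the $(N_1N_2)$-th power property,'' as you propose, forfeits exactly the claim that makes the theorem useful. So: the chaining step matches the paper; the angle-rescaling step should be stated as a concrete identity on diagonal gates rather than a labeling convention.
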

\begin{proof}
Let $V=A_{N_1}(U_k;\pi/N_1)$, with the element $b=V_{21}$ having the property that $|b|=|b_k|^{N_1}$. Then we plug this into the next sequence and get $U_{k+1}=A_{N_2}(V;\pi/N_2)$, and since $A_{N_1}$ preserves the unitary property we have $|b_{k+1}|=|b|^{N_2}=|b_{k}|^{N_1N_2}$. The only part remaining is to manipulate our sequences slightly, since each angle has to be $\theta=\frac{\pi}{N_1N_2}$. Here, let $N=N_1N_2$, and we can write our sequence as
\begin{equation}
    U_{k+1}=A_{N_2}\left(A_{N_1}\left(U_k;\theta\frac{N}{N_1}\right);\theta\frac{N}{N_2}\right).
\end{equation}
Plugging in $\theta=\pi/N$ yields our desired relation.
\end{proof}
By generously applying this theorem, we can construct a prime factorization for a sequence of any composite angle.
\begin{cor}
Let $p_j$ be odd primes such that for all $j=1,\ldots,\ell$, there exists a sequence such that $U_{k+1}=A_{p_j}(U_k;\pi/p_j)$ converges as $|b_{k+1}|=|b_k|^{p_j}$. Then we can compose these sequences together in any order and obtain a sequence $U_{k+1}=A_N(U_k;\theta)$, where $N=p^{m_1}_1\ldots p^{m_\ell}_\ell$ and $\theta=\frac{\pi}{N}$, where $m_j$ are nonnegative integers. This sequence converges as $|b_{k+1}|=|b_k|^N$.
\end{cor}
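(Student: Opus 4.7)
The plan is to proceed by strong induction on the total number of prime factors counted with multiplicity, $m := m_1 + m_2 + \cdots + m_\ell$. The base case $m = 1$ is immediate: exactly one $m_j$ equals $1$ and the rest vanish, so $N = p_j$ and the required sequence is precisely the one furnished by the hypothesis of the corollary. For the inductive step, assume the statement holds for every integer whose prime factorization within the set $\{p_1,\ldots,p_\ell\}$ uses exactly $m$ factors, and consider $N = p_1^{m_1}\cdots p_\ell^{m_\ell}$ with $m+1$ such factors.

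Pick any index $i$ with $m_i \geq 1$ and factor $N = p_i \cdot N'$, where $N' = p_1^{m_1}\cdots p_i^{m_i-1}\cdots p_\ell^{m_\ell}$ has exactly $m$ prime factors with multiplicity. The inductive hypothesis furnishes a sequence $V_{k+1} = A_{N'}(V_k;\pi/N')$ with $|(V_{k+1})_{21}| = |(V_k)_{21}|^{N'}$, while the hypothesis of the corollary supplies $A_{p_i}(\,\cdot\,;\pi/p_i)$ with the analogous single-step decay. Applying the preceding theorem to the pair $(N_1,N_2) = (N', p_i)$ then yields a composed sequence, which we relabel $A_N(U_k;\pi/N)$, satisfying $|b_{k+1}| = |b_k|^{N' \cdot p_i} = |b_k|^N$. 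This closes the induction and proves the existence of the desired $A_N$.

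The ``any order'' flexibility is a direct byproduct of the induction: at each step we are free to peel off any prime $p_i$ with $m_i > 0$, and distinct orderings will in general produce formally distinct composed sequences $A_N$, so there are typically many valid choices of $A_N(U_k;\pi/N)$ for a given $N$, all sharing the required convergence rate. Moreover, the prime factorization hypothesis is not essential to the argument --- the primes $p_j$ could be replaced with any collection of odd integers for which convergent base sequences are already in hand --- which is why the previous section's observation that $p=15$ admits a sequence different from the conjectured one falls out as a special case.

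The only bookkeeping subtlety, and certainly not a real obstacle, is tracking angle rescalings via the identity $A_{N_1 N_2}(U_k;\pi/(N_1 N_2)) = A_{N_2}(A_{N_1}(U_k;\theta N/N_1);\theta N/N_2)$ extracted from the proof of the preceding theorem, so that the final composed object is uniformly expressed in terms of the common angle $\theta = \pi/N$ rather than the factor-specific angles $\pi/N'$ and $\pi/p_i$ that appear during the induction.
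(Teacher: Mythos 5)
Your proposal is correct and follows essentially the same route as the paper, which simply states that the corollary follows "by generously applying" the preceding composition theorem and records the resulting nested formula $A_N = A^{m_1}_{p_1}(A^{m_2}_{p_2}(\cdots))$; you have merely spelled out the implicit induction on the number of prime factors (with multiplicity) and the angle-rescaling bookkeeping that the paper leaves tacit.
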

One way to write this formula is as follows, where $A^{m}_{p}$ is to denote function composition of $A_p$ with itself $m$ times (assume that the angle applied is the same every time):
\begin{equation}
\begin{aligned}
    U_{k+1}^{(N)}&=A_N(U_k;\theta)\\&=A^{m_1}_{p_1}\left(A^{m_2}_{p_2}\bigg(\ldots\left(A^{m_\ell}_{p_\ell}\left(U_k;\theta\frac{N}{p_\ell}\right)\ldots\bigg);\theta\frac{N}{p_2}\right);\theta\frac{N}{p_1}\right)\implies |b_{k+1}|=|b_k|^N.
\end{aligned}
\end{equation}
This corollary also has an interesting consequence: for any composite $N$, we can construct additional sequences alongside the diagonalizing sequence given from the definition that all converge as $|b_{k+1}|=|b_k|^N$, implying that the diagonalizing sequence is not unique up to a phase angle.

Constructing in this manner allows for a simple method of producing sequences for angles of composite $N$. In order to keep the same angle all the way through, you take every instance of $\theta$ in either of the individual sequences and multiply by $N/p_j$, that way you have a modified sequence that can converge for $\theta=\pi/N$ instead of $\pi/p_1$.
This is quite an abstract formulation, so an example helps.
\begin{example}
We can construct a sequence for $N=15$ by using the first two diagonalizing sequences: 
\begin{gather*}
U_{k+1}=U_k D_1(\theta_3) U^{-1}_k D_1(\theta_3) U_k\\
U_{k+1}=U_k D_1(\theta_5) U^{-1}_k D_2(\theta_5) U_k D_2(\theta_5) U^{-1}_k D_1(\theta_5)U_k.
\end{gather*}
Our combined angle is $\theta=\pi/15$, so $\theta_3=5\theta$ and $\theta_5=3\theta$. We can take the composition we defined above, plugging the order 3 sequence into the order 5 sequence:
\[V_k(5\theta) = U_k D_1(5\theta) U^{-1}_k D_1(5\theta) U_k,\quad V_k^{-1}=U_k^{-1} D_1^{-1}(5\theta) U_k D_1^{-1}(5\theta) U_k^{-1}.\]
Combining everything, we get (dropping $(5\theta)$ from $V_k(5\theta)$ for brevity)
\begin{align*}
    \tilde U_{k+1}&=V_k D_1(3\theta) V_k^{-1} D_2(3\theta) V_k D_2(3\theta) V_k^{-1} D_1(3\theta)V_k.
\end{align*}
If we plug in $\theta=\pi/15$, we get exactly what we want: $|b_{k+1}|=|b_k|^{15}$. There is another interesting property of this sequence: it is \textit{not} equivalent to the conjectured diagonalizing sequence for $n=15$ up to a factor, despite having the same convergence rate. The conjectured formulation we have used throughout the paper applies multiples of angles $m\theta$ for $m=1,\ldots,n$, whereas this example consists of applied angles $3\theta$ and $5\theta$. Despite yielding more or less the same end result, the process by which the sequences do so is different. 
\end{example}

\section{Conclusion}\label{section7}
We have provided several results on the nature of these families of diagonalizing sequences, although a few remain to be shown. This paper primarily concerned itself with the critical angles for the sequence for optimal convergence, not the analysis of the behavior in a neighborhood of the angle. Establishing bounds on convergence rates for any $\theta\in (0,\pi/2)$ would be helpful for implementation. Also, these diagonalizing sequences are also being proposed as the optimal algorithm for demonstrating the conjectured convergence rate, but there are likely sequences of this form which converge globally but not optimally. One possible example is given by replacing every diagonal matrix with $D(\theta)=\diag(1,e^{i\theta})$, which appears to still guarantee convergence for any input matrix, just not optimally.

Some mathematical details to the nature of these sequences are needed. In particular, the identity in Conjecture \ref{conj:values} is a very complicated result to show, one that to our knowledge has never been proven. A good mathematical explanation as to why these identities simplify is needed. In addition, the prime factorization argument we provided disproves uniqueness of roots $\lambda_{jl}$ for composite angles, but it is still unknown for prime angles $\pi/p$. If it is unique, then what is special about those roots?

Variants of this scheme can likely be applied for arbitrary states $|s\rangle$ and $|t\rangle$, providing an efficient subroutine for a variety of algorithms with minimal usage of gates.

\noindent \textbf{Acknowledgments.}
S.X.C is partially supported by NSF CCF 2006667, ARO MURI, and  Quantum Science Center (DOE).

\appendix
\section{Trigonometric Identities}\label{app:trig}
Here are all of the trigonometric identities used, as well as a proof since most of them are not commonly used.
\begin{lemma}\label{LemmaA1}
For $\theta=\pi/\left(2n+1\right)$, $\forall n\in\mathbb{N}$,
\begin{equation}
    \sum_{k=0}^{n}(-1)^k\cos(k\theta)=\frac{1}{2}.
\end{equation}
\end{lemma}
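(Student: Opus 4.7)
The plan is to recognize the alternating cosine sum as the real part of a geometric progression and then exploit the specific value of $\theta$ to force a cancellation at the boundary. Concretely, since $\cos(k\theta)=\Re(e^{ik\theta})$,
\[
S_n \;:=\; \sum_{k=0}^{n}(-1)^k\cos(k\theta) \;=\; \Re\sum_{k=0}^{n}(-e^{i\theta})^k \;=\; \Re\,\frac{1-(-e^{i\theta})^{n+1}}{1+e^{i\theta}}.
\]
So the first step is just closing up the geometric sum, and the remaining work is to simplify the right-hand side.

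Next, I would symmetrize the denominator by multiplying numerator and denominator by $e^{-i\theta/2}$. The denominator becomes $e^{-i\theta/2}+e^{i\theta/2}=2\cos(\theta/2)$, which is real and positive (for $\theta=\pi/(2n+1)\in(0,\pi)$). The numerator becomes
\[
e^{-i\theta/2} \;-\; (-1)^{n+1}\,e^{i(2n+1)\theta/2}.
\]
Here is where the hypothesis $\theta=\pi/(2n+1)$ enters: we have $(2n+1)\theta/2=\pi/2$, so $e^{i(2n+1)\theta/2}=i$, whose real part is $0$. Taking the real part of the numerator therefore gives $\cos(\theta/2)$, and dividing by the real denominator yields
\[
S_n \;=\; \frac{\cos(\theta/2)}{2\cos(\theta/2)} \;=\; \frac{1}{2},
\]
as claimed.

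There is essentially no obstacle in this proof; the only subtlety is making sure that $\cos(\theta/2)\neq 0$ so that the final division is legitimate, which is immediate since $\theta/2=\pi/(4n+2)\in(0,\pi/2)$. An alternative plan would be an induction on $n$ using sum-to-product identities, but the geometric-series approach is cleaner because the defining property $\theta=\pi/(2n+1)$ plugs in directly as $(n+\tfrac12)\theta=\pi/2$, which is exactly the identity that kills the "tail" term of the geometric sum.
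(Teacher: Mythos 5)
Your proof is correct, and it takes a genuinely cleaner route than the paper's. The paper rewrites $(-1)^k\cos(k\theta)$ as $\cos(k(\theta+\pi))$, invokes the closed-form Dirichlet-kernel-style formula $\sum_{k=0}^{n}\cos(k\phi)=\frac{\sin((n+1)\phi/2)}{\sin(\phi/2)}\cos(n\phi/2)$, and then grinds through a product-to-sum identity to extract the factor of $\tfrac12$. You instead keep everything in complex exponentials: close up $\sum(-e^{i\theta})^k$ as a geometric series, symmetrize by $e^{-i\theta/2}$, and observe that the hypothesis $\theta=\pi/(2n+1)$ is exactly the condition $(n+\tfrac12)\theta=\pi/2$, which makes the tail term purely imaginary and hence invisible to $\Re$. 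Both arguments are ultimately the geometric series in disguise (the sum-of-cosines formula the paper cites is itself proved that way), but your version exposes why the specific angle matters in one transparent step, avoids the sum-of-cosines formula and the subsequent $2\sin\theta\cos\varphi$ manipulation, and handles the domain caveat ($\cos(\theta/2)\neq0$) explicitly. No gaps.
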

\begin{proof}
We can convert this sum into a well-known identity:
$$\sum_{k=0}^{n}(-1)^k\cos(k\theta)=\sum_{k=0}^{n}\cos(k(\theta+\pi))=\frac{\sin((n+1)(\theta+\pi)/2)}{\sin((\theta+\pi)/2)}\cos(n(\theta+\pi)/2).$$
$\theta+\pi=(2n+2)\pi/(2n+1)$, and using the well known trigonometric identity $2\sin \theta \cos \varphi ={\sin(\theta +\varphi )+\sin(\theta -\varphi )}$ we obtain
\begin{align*}
    &=\frac{\sin\left(\frac{(n+1)^2}{2n+1}\pi\right)}{\sin\left(\frac{n+1}{2n+1}\pi\right)}\cos\left(\frac{n^2+n}{2n+1}\pi\right)\\
    &=\frac{1}{2\sin\left(\frac{n+1}{2n+1}\pi\right)}\left[\sin\left(\frac{2n^2+3n+1}{2n+1}\pi\right)+\sin\left(\frac{n+1}{2n+1}\pi\right)\right]\\
    &=\frac{1}{2\sin\left(\frac{n+1}{2n+1}\pi\right)}\left[\sin\left(\frac{(2n+1)(n+1)}{2n+1}\pi\right)+\sin\left(\frac{n+1}{2n+1}\pi\right)\right]=\frac{1}{2}.
\end{align*}
\end{proof}

\begin{lemma}\label{LemmaA2}
For $n\in\mathbb{N}$:
\begin{equation}
    \prod_{k=1}^{n}\sin\left(\frac{(2k-1)\pi}{4n+2}\right)=\frac{1}{2^{n}}.
\end{equation}
\end{lemma}
\begin{proof}
To prove this last step, we transform the product as
\begin{align*}
    \prod_{k=1}^{n}\sin\left(\frac{(2k-1)\pi}{4n+2}\right)=\prod_{k=1}^{n}\cos\left(\frac{\pi}{2}-\frac{(2k-1)\pi}{4n+2}\right)\\
    =\prod_{k=1}^{n}\cos\left(\frac{\left(n+1-k\right)\pi}{2n+1}\right)
    =\prod_{k=1}^{n}\cos\left(\frac{k\pi}{2n+1}\right)\equiv P.
\end{align*}
Multiply by the sine counterpart:
$$Q\equiv\prod_{k=1}^{n}\sin\left(\frac{k\pi}{2n+1}\right),$$
\begin{align*}
    P\cdot Q&=\prod_{k=1}^{n}\cos\left(\frac{k\pi}{2n+1}\right)\sin\left(\frac{k\pi}{2n+1}\right)=\prod_{k=1}^{n}\frac{1}{2}\sin\left(\frac{2k\pi}{2n+1}\right)\\
    &=\frac{1}{2^n}\prod_{k\leq\frac{n}{2}}\sin\left(\frac{2k\pi}{2n+1}\right)\prod_{k>\frac{n}{2}}\sin\left(\frac{2k\pi}{2n+1}\right)\\
    &=\frac{1}{2^n}\prod_{k\leq\frac{n}{2}}\sin\left(\frac{2k\pi}{2n+1}\right)\prod_{k>\frac{n}{2}}\sin\left(\frac{(2n+1-2k)\pi}{2n+1}\right).
\end{align*}
The first product covers every even numbered index, and the second handles all of the odd numbered cases, and so this reduces to
\begin{align*}
    &=\frac{1}{2^n}\prod_{k=1}^n\sin\left(\frac{k\pi}{2n+1}\right)=\frac{1}{2^n} Q\\
    &\implies P\cdot Q=\frac{1}{2^n}Q\implies P=\prod_{k=1}^{n}\cos\left(\frac{k\pi}{2n+1}\right)=\frac{1}{2^n}.
\end{align*}
\end{proof}

\begin{lemma}\label{LemmaA3}
For $\theta=\pi/\left(2n+1\right)$, $\forall n\in\mathbb{N}$,
\begin{equation}
\prod_{k=1}^{n}\left(2\cos(k\theta)+2(-1)^k\right)=\left(-1\right)^{\frac{n\left(n+1\right)}{2}}.
\end{equation}
\end{lemma}
\begin{proof}
This statement is equivalent to saying that for $m\geq 0$, if $n=2m$ or $2m+1$, we have
$$\prod_{k=1}^{2m}\left(2\cos(k\theta)+2(-1)^k\right)=(-1)^m,\quad \prod_{k=1}^{2m+1}\left(2\cos(k\theta)+2(-1)^k\right)=(-1)^{m+1}.$$
For $n=2m$, we can split this product using $2\sin^2(\theta/2)=1-\cos(\theta)$ and $2\cos^2(\theta/2)=1+\cos(\theta)$:
\begin{gather*}
    \prod_{k=1}^{2m}\left(2\cos(k\theta)+2(-1)^k\right)=\prod_{k=1}^{m}\left(2\cos\left(\left(2k-1\right)\theta\right)-2\right)\left(2\cos\left(2k\theta\right)+2\right)\\
    =\prod_{k=1}^{m}-16\sin^2\left(\frac{(2k-1)\theta}{2}\right)\cos^2\left(k\theta\right)=2^{4m}(-1)^m\prod_{k=1}^{m}\sin^2\left(\frac{(2k-1)\theta}{2}\right)\cos^2\left(k\theta\right).
\end{gather*}
It remains to just prove that
\begin{gather*}
    \frac{1}{2^{2m}}=\prod_{k=1}^{m}\sin\left(\frac{(2k-1)\theta}{2}\right)\cos\left(k\theta\right)=\prod_{k=1}^{m}\sin\left(\frac{(2k-1)\pi}{8m+2}\right)\cos\left(\frac{k\pi}{4m+1}\right)\\
    =\prod_{k=1}^{m}\sin\left(\frac{(2k-1)\pi}{8m+2}\right)\sin\left(\frac{(4m+1-2k)\pi}{8m+2}\right).
\end{gather*}
Reversing the product order for the second sine, we get that we can combine the products as 
$$=\prod_{k=1}^{2m}\sin\left(\frac{(2k-1)\pi}{8m+2}\right)=\prod_{k=1}^{n}\sin\left(\frac{(2k-1)\pi}{4n+2}\right)=\frac{1}{2^{n}}=\frac{1}{2^{2m}}.$$
As for the odd case, it is much of the same setup:
\begin{gather*}
    \prod_{k=1}^{2m+1}\left(2\cos(k\theta)+2(-1)^k\right)=\prod_{k=1}^{m+1}\left(2\cos((2k-1)\theta)-2\right)\prod_{k=1}^{m}\left(2\cos(2k\theta)+2\right)\\
    =2^{4m+2}(-1)^{m+1}\prod_{k=1}^{m+1}\sin^2\left(\frac{(2k-1)\theta}{2}\right)\prod_{k=1}^{m}\cos^2\left(k\theta\right).
\end{gather*}
It remains to prove that
\begin{gather*}
    \frac{1}{2^{2m+1}}=\prod_{k=1}^{m+1}\sin\left(\frac{(2k-1)\theta}{2}\right)\prod_{k=1}^{m}\cos\left(k\theta\right)=\prod_{k=1}^{m+1}\sin\left(\frac{(2k-1)\pi}{8m+6}\right)\prod_{k=1}^{m}\cos\left(\frac{k\pi}{4m+3}\right)\\
    =\prod_{k=1}^{m+1}\sin\left(\frac{(2k-1)\pi}{8m+6}\right)\prod_{k=1}^{m}\sin\left(\frac{(4m+3-2k)\pi}{8m+6}\right).
\end{gather*}
Again, reversing the product order in the second product allows us to combine these together:
\[=\prod_{k=1}^{2m+1}\sin\left(\frac{(2k-1)\pi}{8m+6}\right)=\prod_{k=1}^{n}\sin\left(\frac{(2k-1)\pi}{4n+2}\right)=\frac{1}{2^{n}}=\frac{1}{2^{2m+1}}.\]
\end{proof}

\section{Explicit Calculations For n=1,2,3.}
Here is a reference for the explicitly calculated forms for $\alpha_{j}^{(n)}$ and $\beta_j^{(n)}$ for $n=1,2,3$. For order 1 we have the coefficients
\[\alpha_0^{(1)}=\lambda_{10}^2,\quad \beta_0^{(1)}=\lambda_{10}\lambda_{11},\quad \alpha_1^{(1)}=\beta_1^{(1)}=(\lambda_{10}-\lambda_{11})^2.\]
For order 2 we have the coefficients
\[\alpha_0^{(2)}=\lambda_{10}^2\lambda_{20}^2,\quad\beta_0^{(2)}=\lambda_{10}\lambda_{11}\lambda_{20}\lambda_{21},\quad \alpha_2^{(2)}=\beta_2^{(2)}=(\lambda_{10}-\lambda_{11})^2(\lambda_{20}-\lambda_{21})^2,\]
\[\alpha_1^{(2)} = (\lambda_{10}\lambda_{20}-\lambda_{11}\lambda_{21})^2-\lambda_{11}^2(\lambda_{20}-\lambda_{21})^2,\]
\[\beta_1^{(2)} = (\lambda_{10}\lambda_{20}-\lambda_{11}\lambda_{21})^2-\lambda_{10}\lambda_{11}(\lambda_{20}-\lambda_{21})^2.\]
In order 3 we have the coefficients
\[\alpha_0^{(3)}=\lambda_{10}^2\lambda_{20}^2\lambda_{30}^2,\quad\beta_0^{(3)}=\lambda_{10}\lambda_{11}\lambda_{20}\lambda_{21}\lambda_{30}\lambda_{31},\]
\[\alpha_3^{(3)}=\beta_3^{(3)}=(\lambda_{10}-\lambda_{11})^2(\lambda_{20}-\lambda_{21})^2(\lambda_{30}-\lambda_{31})^2,\]
\[\alpha_1^{(3)} = (\lambda_{10}\lambda_{20}\lambda_{30}-\lambda_{11}\lambda_{21}\lambda_{31})^2-\lambda_{10}\lambda_{11}(\lambda_{20}\lambda_{30}-\lambda_{21}\lambda_{31})^2+\lambda_{10}\lambda_{11}\lambda_{20}\lambda_{21}(\lambda_{30}-\lambda_{31})^2,\]
\[\beta_1^{(3)} = (\lambda_{10}\lambda_{20}\lambda_{30}-\lambda_{11}\lambda_{21}\lambda_{31})^2-\lambda_{10}\lambda_{11}(\lambda_{20}\lambda_{30}-\lambda_{21}\lambda_{31})^2+\lambda_{10}\lambda_{11}\lambda_{20}\lambda_{21}(\lambda_{30}-\lambda_{31})^2.\]
\begin{align*}
    \alpha_2^{(3)}=&(\lambda_{11}-\lambda_{10})^2(\lambda_{20}\lambda_{30}-\lambda_{21}\lambda_{31})^2-(\lambda_{11}\lambda_{20}-\lambda_{10}\lambda_{21})^2(\lambda_{30}-\lambda_{31})^2\\
    &+\lambda_{10}^2(\lambda_{20}-\lambda_{21})^2(\lambda_{30}-\lambda_{31})^2,\\
    \beta_2^{(3)}=&(\lambda_{11}-\lambda_{10})^2(\lambda_{20}\lambda_{30}-\lambda_{21}\lambda_{31})^2-(\lambda_{11}\lambda_{20}-\lambda_{10}\lambda_{21})^2(\lambda_{30}-\lambda_{31})^2\\&+\lambda_{10}\lambda_{11}(\lambda_{20}-\lambda_{21})^2(\lambda_{30}-\lambda_{31})^2.
\end{align*}

\bibliographystyle{plain}
\bibliography{main}

@article{Reichardt2012,
  title={Systematic distillation of composite {F}ibonacci anyons using one mobile quasiparticle},
  author={Reichardt, Ben W},
  journal={Quantum Information \& Computation},
  volume={12},
  number={9-10},
  pages={876--892},
  year={2012},
  publisher={Rinton Press, Incorporated Paramus, NJ}
}

@article{Ruiz1996,
title={An algebraic identity leading to {W}ilson’s theorem}, 
volume={80}, 
DOI={10.2307/3618534}, 
number={489}, 
journal={The Mathematical Gazette}, 
publisher={Cambridge University Press}, 
author={Ruiz, Sebastián Martín}, 
year={1996}, 
pages={579–582}
}

@article{Grover2005,
  title={Fixed-point quantum search},
  author={Grover, Lov K},
  journal={Physical Review Letters},
  volume={95},
  number={15},
  pages={150501},
  year={2005},
  publisher={APS}
}

@article{cui2019search,
  title={The search for leakage-free entangling {F}ibonacci braiding gates},
  author={Cui, Shawn X and Tian, Kevin T and Vasquez, Jennifer F and Wang, Zhenghan and Wong, Helen M},
  journal={Journal of Physics A: Mathematical and Theoretical},
  volume={52},
  number={45},
  pages={455301},
  year={2019},
  publisher={IOP Publishing}
}

@article{carnahan2016systematically,
  title={Systematically generated two-qubit anyon braids},
  author={Carnahan, Caitlin and Zeuch, Daniel and Bonesteel, NE},
  journal={Physical Review A},
  volume={93},
  number={5},
  pages={052328},
  year={2016},
  publisher={APS}
}

@article{reichardt2005quantum,
  title={Quantum error correction of systematic errors using a quantum search framework},
  author={Reichardt, Ben W and Grover, Lov K},
  journal={Physical Review A},
  volume={72},
  number={4},
  pages={042326},
  year={2005},
  publisher={APS}
}

@article{dawson2006solovay,
  title={The {S}olovay-{K}itaev algorithm},
  author={Dawson, CM and Nielsen, MA},
  journal={Quantum Information and Computation},
  volume={6},
  number={1},
  pages={81--95},
  year={2006},
  publisher={Rinton Press}
}

@article{wocjan2008speedup,
  title={Speedup via quantum sampling},
  author={Wocjan, Pawel and Abeyesinghe, Anura},
  journal={Physical Review A},
  volume={78},
  number={4},
  pages={042336},
  year={2008},
  publisher={APS}
}

@article{wocjan2009quantum,
  title={Quantum algorithm for approximating partition functions},
  author={Wocjan, Pawel and Chiang, Chen-Fu and Nagaj, Daniel and Abeyesinghe, Anura},
  journal={Physical Review A},
  volume={80},
  number={2},
  pages={022340},
  year={2009},
  publisher={APS}
}

@article{freedman2002modular,
  title={A modular functor which is universal for quantum computation},
  author={Freedman, Michael H and Larsen, Michael and Wang, Zhenghan},
  journal={Communications in Mathematical Physics},
  volume={227},
  number={3},
  pages={605--622},
  year={2002},
  publisher={Springer}
}

@article{kliuchnikov2014asymptotically,
  title={Asymptotically optimal topological quantum compiling},
  author={Kliuchnikov, Vadym and Bocharov, Alex and Svore, Krysta M},
  journal={Physical review letters},
  volume={112},
  number={14},
  pages={140504},
  year={2014},
  publisher={APS}
}

@phdthesis{bonderson2007non,
  title={Non-{A}belian Anyons and Interferometry},
  author={Bonderson, Parsa Hassan},
  year={2007},
  school={California Institute of Technology}
}

\end{document}